\newcommand{\Section}[1]{\section{#1}}
\newcommand{\Subsection}[1]{\noindent {\bf #1}~~~}
\newcommand{\SUBsubsection}[1]{\noindent {\em #1}~~~}%
\newcommand{\remove}[1]{}
\newenvironment{proofsketch}{\noindent {\bf Proof Sketch.}\ }{\qed\par\vskip 4mm\par}
\begin{document}

\title{Self-stabilizing TDMA Algorithms for Wireless Ad-hoc Networks without External Reference
~\thanks{The work of this author was partially supported by the EC, through
project FP7-STREP-288195, KARYON (Kernel-based ARchitecture for
safetY-critical cONtrol). This work appears as a brief
announcement~\cite{DBLP:conf/sss/PetigST13}.}}

\author{Thomas Petig
   \footnote{Computer Science and Engineering, Chalmers
University of Technology, Sweden.%
}\\\texttt{petig@chalmers.se}  \and
Elad M.\ Schiller\footnotemark[2]\\\texttt{elad@chalmers.se} \and
Philippas Tsigas\footnotemark[2] \\\texttt{tsigas@chalmers.se}}
\date{}

\maketitle              %

\begin{abstract}
Time division multiple access (TDMA) is a method for sharing communication
media. In wireless communications, TDMA algorithms often divide the radio time
into timeslots of uniform size, $\xi$, and then combine them into frames of
uniform size, $\tau$. We consider TDMA algorithms that allocate at least one
timeslot in every frame to every node. Given a maximal node degree, $\delta$,
and no access to external references for collision detection, time or
position, we consider the problem of collision-free
self-stabilizing TDMA algorithms that use constant frame size.

We demonstrate that this problem has no solution when the frame size is $\tau <
\max\{2\delta,\chi_2\}$, where $\chi_2$ is the
chromatic number for distance-$2$ vertex coloring.
As a complement to this lower bound, we focus on proving the existence
of collision-free self-stabilizing TDMA algorithms that use
constant frame size of $\tau$. We consider basic settings (no
hardware support for collision detection and no prior clock synchronization),
and the collision of concurrent transmissions from transmitters that are at
most two hops apart. In the context of self-stabilizing systems that have no
external reference, we are the first to study this problem (to the best of our
knowledge), and use simulations to show convergence even with computation time uncertainties.\\

\end{abstract}

\Section{Introduction}
Autonomous and cooperative systems will ultimately carry out risk-related
tasks, such as piloting driverless cars, and liberate mankind from mundane
labor, such as factory and production work. Note that the implementation of
these cooperative systems implies the use of wireless ad hoc networks and
their critical component -- the {\em medium access control} (MAC) layer. Since
cooperative systems operate in the presence of people, their safety
requirements include the provision of real-time guarantees, such as constant
communication delay. Infrastructure-based wireless networks successfully
provide high bandwidth utilization and constant communication delay. They
divide the radio into {\em timeslots} of uniform size, $\xi$, that are then
combined into {\em frames} of uniform size, $\tau$. Base-stations, access
points or wireless network coordinators can schedule the frame in a way that
enables each node to transmit during its own timeslot, and arbitrate between
nearby nodes that wish to communicate concurrently. We strive to provide the
needed MAC protocol properties, using limited radio and clock settings, i.e.,
no external reference for collision detection, time or position. Note that ad hoc networks often do not consider collision detection mechanisms, and external references are subject to signal loss. For these settings, we demonstrate that there is no solution for the studied problem
when the frame size is $\tau <\max\{2\delta,\chi_2\}$, where  $\delta$
is a bound on the node degree, and $\chi_2$ is the chromatic number for
distance-$2$ vertex coloring. The main result is the existence of
collision-free self-stabilizing TDMA algorithms that use
constant frame size of $\tau >
\max\{4\delta,X_2\}+1$, where $X_2\geq \chi_2$ is a number that depends on
the coloring algorithm in use. To the best of our knowledge, we are the
first to study the problem of self-stabilizing TDMA timeslot allocation
without external reference. The algorithm simulations demonstrate feasibility in a way that is close to the practical realm.

Wireless ad hoc networks have a dynamic nature that is difficult to predict.
This gives rise to many fault-tolerance issues and requires efficient
solutions. These networks are also subject to transient faults due to temporal
malfunctions in hardware, software and other short-lived violations of the
assumed system settings, such as changes to the communication graph topology.
We focus on fault-tolerant systems that recover after the occurrence of
transient faults, which can cause an arbitrary corruption of the system state
(so long as the program's code is still intact). These {\em
self-stabilizing}~\cite{Dolev:2000} design criteria simplify the task of the
application designer when dealing with low-level complications, and provide an
essential level of abstraction. Consequently, the application design can
easily focus on its task -- and knowledge-driven aspects.

ALOHAnet protocols~\cite{abramson1985da} are pioneering MAC algorithms that
let each node select one timeslot per TDMA frame at random. In the Pure Aloha
protocol, nodes may transmit at any point in time, whereas in the Slotted
Aloha version, the transmissions start at the timeslot beginning. The latter
protocol has a shorter period during which packets may collide, because each transmission can collide only with transmissions that occur
within its timeslot, rather than with two consecutive timeslots as in the Pure
Aloha case. Note that the random access approach of ALOHAnet cannot provide
constant communication delay. Distinguished nodes are often used when the
application requires bounded communication delays, e.g., IEEE 802.15.4 and
deterministic self-stabilizing
TDMA~\cite{arumugam2006self,DBLP:journals/comcom/KulkarniA06}. Without such
external references, the TDMA algorithms have to align the timeslots while
allocating them. Existing algorithms~\cite{DBLP:journals/dc/BuschMSY08}
circumvent this challenge by assuming that $\tau/(\Delta +1)\geq 2$, where
$\Delta$ is an upper bound on the number of nodes with whom any node can
communicate with using at most one intermediate node for relaying messages.
This guarantees that every node can transmit during at least one timeslot, $s$, such that no other transmitter that is at most two hops away, also transmits during $s$.
However, the $\tau/(\Delta +1)\geq 2$ assumption implies bandwidth utilization that is up to $\sO(\delta)$ times lower than the proposed algorithm, because $\Delta \in
\sO(\delta^2)$.

As a basic result, we show that $\tau/\delta \geq 2$, and as a complement to
this lower bound, we focus on considering the case of $\tau/\delta \geq 4$. We
present a collision-free self-stabilizing TDMA algorithm that
use constant frame size of $\tau$. We show that it is sufficient
to guarantee that collision freedom for a single timeslot, $s$, and
a {\em single} receiver, rather than {\em all} neighbors. This narrow
opportunity window allows control packet exchange, and timeslot alignment.
After convergence, there are no collisions of any kind, and each frame
includes at most one control packet.

\Subsection{Related work}
Herman and Zhang~\cite{HermanZhang:2006} assume constant bounds on the communication delay and present self-stabilizing clock synchronization algorithms for wireless ad hoc networks. Herman and Tixeuil~\cite{DBLP:conf/algosensors/HermanT04} assume access to synchronized clocks and present the first self-stabilizing TDMA algorithm for wireless ad hoc networks. They use external reference for dividing the radio time into timeslots and assign them according to the neighborhood topology.
The self-stabilization literature often does not answer the causality dilemma of ``which came first, synchronization or communication'' that resembles Aristotle's {\em `which came first, the chicken or the egg?}' dilemma. On one hand, existing clock synchronization algorithms often assume the existence of MAC algorithms that offer bounded communication delay, e.g.~\cite{HermanZhang:2006}, but on the other hand, existing MAC algorithms that provide bounded communication delay, often assume access to synchronized clocks, e.g.~\cite{DBLP:conf/algosensors/HermanT04}.
We propose a bootstrapping solution to the causality dilemma of ``which came first, synchronization or communication'', and discover convergence criteria that depend on $\tau/\delta$.

The {\em converge-to-the-max synchronization} principle assumes that nodes periodically transmit their clock value, $ownClock$. Whenever they receive clock values, $receivedClock>ownClock$, that are greater than their own, they adjust their clocks accordingly, i.e., $ownClock \gets receivedClock$. Herman and Zhang~\cite{HermanZhang:2006} assume constant bounds on the communication delay and demonstrate convergence. Basic radio settings do not include constant bounds on the communication delay. We show that the converge-to-the-max principle works when given bounds on the expected communication delay, rather than constant delay bounds, as in~\cite{HermanZhang:2006}.

The proposal in~\cite{DBLP:conf/iwdc/Herman03} considers shared variable
emulation. Several self-stabilizing algorithms adopt this abstraction, e.g., a
generalized version of the dining philosophers problem for wireless networks
in~\cite{DBLP:journals/taas/DanturiNT09}, topology discovery in anonymous
networks~\cite{Masuzawa:2009}, random distance-$k$ vertex
coloring~\cite{DBLP:conf/icpads/MittonFLST06}, deterministic distance-$2$
vertex coloring~\cite{DBLP:journals/tcs/BlairM12}, two-hop conflict
resolution~\cite{DBLP:conf/sss/PomportesTBV10}, a transformation from central
demon models to distributed scheduler ones~\cite{DBLP:conf/eurongi/TurauW06},
to name a few.
The aforementioned algorithms assume that if a node transmits infinitely many
messages, all of its communication neighbors will receive infinitely many of
them. We do not make such assumptions about {\em (underlying) transmission
fairness}. We assume that packets, from transmitters that are at most two hops
apart, can collide {\em every time}.

The authors of~\cite{DBLP:conf/algosensors/LeonePS09} present a MAC algorithm
that uses convergence from a random starting state (inspired by
self-stabilization).
In~\cite{DBLP:conf/sss/LeonePSZ10,DBLP:conf/vtc/MustafaPSTT12}, the
authors use computer network simulators for evaluating self-$\star$ MAC
algorithms. A self-stabilizing TDMA algorithm, that accesses external time
references, is presented in~\cite{DBLP:journals/corr/abs-1210-3061}.
Simulations are used for evaluating the heuristics of
MS-ALOHA~\cite{DBLP:conf/vtc/ScopignoC09} for dealing with timeslot exhaustion
by adjusting the nodes' individual transmission signal strength. We provide
analytical proofs and consider basic radio settings. The results presented
in~\cite{DBLP:conf/icdcit/JhumkaK07,DBLP:conf/broadnets/DemirbasH06} do not
consider the time it takes the algorithm to converge, as we do. We mention
a number of MAC algorithms that consider onboard hardware support, such as
receiver-side collision
detection~\cite{DBLP:conf/broadnets/DemirbasH06,DBLP:conf/vtc/ScopignoC09,CS09,YB07,DBLP:journals/dc/BuschMSY08}.
We consider merely basic radio technology that is commonly used in wireless ad
hoc networks.
The MAC algorithms in~\cite{YB07,DBLP:conf/algosensors/ViqarW09} assumes the
accessibility  of an external time or geographical references or the node
trajectories, e.g., Global Navigation Satellite System (GNSS). We instead
integrate the TDMA timeslot alignment with clock synchronization.

\Subsection{Our contribution}
Given a maximal node degree, $\delta$, we consider the problem of the existence of collision-free self-stabilizing TDMA algorithms that use constant frame size of $\tau$. In the context of self-stabilizing systems that have no external reference, we are the first to study this problem (to the best of our knowledge). The proposed self-stabilizing and bootstrapping algorithm answers the causality dilemma of synchronization and communication.

For settings that have no assumptions about fairness and external reference existence, 
we establish a basic limit on the bandwidth utilization of TDMA algorithms in wireless ad hoc networks (Section~\ref{s:bsc}). Namely, $\tau <\max\{2\delta,\chi_2\}$, where
$\chi_2$ is the chromatic number for distance-$2$ vertex
coloring. %
We note that the result holds for general graphs with a clearer connection to bandwidth utilization for the cases of tree graphs ($\chi_2 =\delta+1$) and planar
graphs~\cite{DBLP:journals/jct/MolloyS05} ($\chi_2
=5\delta/3 +\sO(1)$).

We prove the existence of collision-free self-stabilizing TDMA algorithms that
use constant frame size of $\tau$ without assuming the
availability of external references (Section~\ref{s:alg}).
The convergence period is within $\sO(\diam\cdot\tau^2\delta + \tau^4\delta^2)$ steps
starting from an arbitrary configuration, where $\diam$ is the network
diameter. We note that in case the system happens to have access to external
time references, i.e., start from a configuration in which clocks are
synchronized, the convergence time is within $\sO(\tau^3)$, and
$\sO(\tau^3\delta)$ steps when $\tau>2\Delta$, and respectively,
$\tau>\max\{4\delta,\Delta+1\}$. 
We also demonstrate convergence via simulations that take uncertainties into account, such as (local) computation time.

\Section{System Settings}
\label{s:sys}
The system consists of a set, $P := \{ p_i \}$, of communicating entities,
which we call {\em nodes}. An upper bound, $\nu >  | P |$, on the number of
nodes in the system is known. Subscript font is used to point out that $X_i$
is $p_i$'s variable (or constant) $X$. Node $p_i$ has a unique identifier,
$id_i$, that is known to $p_i$ but not necessarily by $p_j \in P \setminus \{
p_i \}$.

\Subsection{Communication graphs}
At any instance of time, the ability of any pair of nodes to communicate, is
defined by the set, $\delta_i \subseteq P$, of {\em (direct) neighbors} that
node $p_i \in P$ can communicate with directly. The system can be
represented by an undirected network of directly communicating nodes,
$G:=(P, E)$, named the {\em communication graph}, where $E:= \{ \{p_i, p_j\}
\in P \times P : p_j \in \delta_i \}$. We assume that $G$ is connected.
For $p_i, p_j \in P$, we define the distance, $d(p_i, p_j)$, as the number of
edges in an edge minimum path connecting $p_i$ and $p_j$. We denote by
$\Delta_i:=\{p_j \in P:0<d(p_i, p_j) \le 2\}$ the $2$-neighborhood of
$p_i$, and the upper bounds on the sizes of $\delta_i$ and $\Delta_i$ are denoted
by $\delta \geq \max_{p_i \in P}( | \delta_i | )$, and respectively,
$\Delta \geq \max_{p_i \in P}( | \Delta_i | )$. We assume that $\diam \ge
\max_{p_i, p_j \in P}d(p_i, p_j)$ is an upper bound on the network diameter.

\Subsection{Synchronization}
The nodes have fine-grained clock hardware (with arbitrary clock offset upon
system start). For the sake of presentation simplicity, our work considers
zero clock skews. %
We assume that the {\em clock} value, $C \in [0,c-1]$, and any timestamp in
the system have $c$ states. The pseudo-code uses the $GetClock()$ function
that returns a timestamp of $C$'s current value. Since the clock value can
overflow at its maximum, and wrap to the zero value, arithmetic expressions
that include timestamp values are module $c$, e.g., the function
$AdvanceClock(x):= C \gets (C + x) \bmod c$ adds $x$ time units to clock
value, $C$, modulo its number of states, $c$. We assume that the maximum clock
value is sufficiently large, $c\gg\diam\tau^2$, to guarantee convergence of
the clock synchronization algorithm, before the clock wrap around.
We say that the clocks are {\em synchronized} when $\forall p_i,p_j \in P:C_i=C_j$, where $C_i$ is $p_i$'s clock value.

Periodic pulses invoke the MAC protocol, and divide the radio time into {\em
(broadcasting) timeslots} of $\xi$ time units in a way that provides
sufficient time for the transmission of a single packet. We group $\tau$
timeslots into {\em (broadcasting) frames}. The pseudo-code uses the event
$timeslot(s)$ that is triggered by the event $0=C_i\bmod \xi$  and $s :=
C_i\div\xi\bmod \tau$ is the {\em timeslot number}, where $\div$ is the
integer division.

\Subsection{Operations}
The communication allows a message exchange between the sender and the receiver.
After the sender, $p_i$, fetches message $m \gets MAC\_fetch_i()$ from the
upper layer, and before the receiver, $p_j$, delivers it to the upper layer in
$MAC\_deliver_j(m)$, they exchange $m$ via the operations $transmit_i(m)$, and
respectively, $m \gets receive_j()$. We model the communication channel,
$q_{i,j}$ (queue), from node $p_i$ to node $p_j \in \delta_i$ as the most
recent message that $p_i$ has sent to $p_j$ and that $p_j$ is about to
receive, i.e., $|q_{i,j}| \leq 1$. When $p_i$ transmits message $m$, the
operation $transmit_i(m)$ inserts a copy of $m$ to every $q_{i,j}$, such that
$p_j \in \delta_i$. Once $m$ arrives, $p_j$ executes $receive()$ and returns
the tuple $\langle i, t_i, t_j, m \rangle$, where $t_i = C_i$ and $t_j = C_j$
are the clock values of the associated $transmit_i(m)$, and respectively, $m
\gets receive_j()$ calls. We assume zero propagation delay and efficient
time-stamping mechanisms for $t_i$ and $t_j$.
Moreover, the timeslot duration, $\xi$, allows the transmission and reception
of at least a single packet, see Property~\ref{def:collisionProperty}.

\begin{property}\label{def:collisionProperty}
   Let $p_i\in P$, $p_j\in\delta_i$. At any point in time $t_i$ in which node
   $p_i$ transmits message $m$ for duration of $\xi$, node $p_j$ receives
   $m$ if there is no node $p_k\in(\delta_i\cup\delta_j)\setminus\{p_i\}$ that
   transmits starting from time $t_k$ with duration $\xi$ such that %
   $[t_i,t_i+\xi)$ and $[t_k,t_k+\xi)$ are intersecting.
\end{property}
This means a node can receive a message if no node in the neighborhood of
      the sender and no node in the neighborhood of the receiver is
transmitting concurrently.

\Subsection{Interferences}
Wireless communications are subject to interferences when two or more
neighboring nodes transmit {\em concurrently}, i.e., the packet transmission
periods overlap or intersect. We model communication interferences, such as
unexpected peaks in ambient noise level and concurrent transmissions of
neighboring nodes, by letting the {\em (communication) environment} to
selectively omit messages from the communication channels. We note that we do {\em not} consider any error (collision) indication from the environment.

The environment can use the operation $omission_{i,j}(m)$ for removing message
$m$ from the communication channel, $q_{i,j}$, when $p_i$'s transmission of $m$
to $p_j \in \delta_i$ is concurrent with the one of $p_k \in \Delta_i$.
Immediately after $transmit_i(m)$, the environment selects a subset of $p_i$'s
neighbors, $Omit_m \subseteq \delta_i$, removes $m$ from $q_{i,j} : p_j \in
Omit_m$ and by that it prevents the execution of  $m \gets receive_j()$. Note
that $Omit_m = \delta_i$ implies that no direct neighbor can receive message
$m$.

\Subsection{Self-stabilization}
Every node, $p_i \in P$, executes a program that is a sequence of {\em
(atomic) steps}, $a_i$. The state, $st_i$, of node $p_i\in P$ includes $p_i$'s
variables, including the clocks and the program control variables, and the
communication channels, $q_{i,j} : p_j \in \delta_i$. The {\em{} (system)
configuration} is a tuple $c:=(st_1,\ldots,st_{|P|})$ of node states. Given a
system configuration, $c$, we define the set of {\em applicable steps}, $a=\{
a_i \}$, for which $p_i$'s state, $st_i$, encodes a non-empty communication
channel or an expired timer. An {\em{}execution} is an unbounded alternating
sequence $R:=(c[0],a[0],c[1],a[1],\ldots)$ (Run) of configurations $c[k]$, and
applicable steps $a[k]$ that are taken by the algorithm and the environment.
The task $\sT$ is a set of specifications and $LE$ (legal execution) is the
set of all executions that satisfy $\sT$.
We say that configuration $c$ is {\em safe}, when every execution that starts
from it is in $LE$. An algorithm is called {\em self-stabilizing} if it
reaches a safe configuration within a bounded number of steps.

\Subsection{Task definition}
We consider the task $\sT_{_\mathrm{TDMA}}$, that requires all nodes, $p_i$, to have timeslots, $s_i$, that are uniquely allocated to $p_i$ within $\Delta_i$. We define $LE_{_\mathrm{TDMA}}$ to be the set of legal executions, $R$, for which $\forall p_i \in P : (p_j \in P\Rightarrow C_i =C_j) \land (((s_i \in [0, \tau-1]) \wedge (p_j \in \Delta_i)) \Rightarrow s_i \neq s_j)$ holds in all of $R$'s configurations. We note that for a given finite $\tau$, there are communication graphs for which $\sT_{_\mathrm{TDMA}}$ does not have a solution, e.g., the complete graph, $K_{\tau+1}$, with $\tau+1$ nodes. In Section~\ref{s:bsc}, we show that the task solution can depend on the (arbitrary) starting configuration, rather then just the communication graph.

\Section{Basic Results}
\label{s:bsc}
We establish a basic limitation of the bandwidth utilization for TDMA
algorithms in wireless ad hoc networks. Before generalizing the limitation, we
present an illustrative example (Lemma~\ref{th:existsConfiguration}) of a
starting configuration for which $\tau <\max\{2\delta,\chi_2\}$, where
$\chi_2$ is the chromatic number for distance-$2$ vertex coloring.

\begin{lemma}\label{th:existsConfiguration}
Let $\delta \in \N$ and $\tau < 2\delta$. Suppose that
the communication graph, $G:=(\{ p_0, \ldots p_\delta \}, E)$, has the
topology of a star, where the node $p_\delta$ is the center (root) node and $E
:= \{p_\delta\} \times L$, where $L:=\{ p_0, \ldots p_{\delta-1} \}$ are the
leaf nodes. There is a starting configuration $c[x]$, such that an execution
$R$ starting from $c[x]$ of any algorithm solves the task
$\sT_{_\mathrm{TDMA}}$ does not converge. 
\end{lemma}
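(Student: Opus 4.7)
The plan is to produce an adversarial starting configuration $c[x]$ together with an adversarial omission schedule that keeps any algorithm from ever satisfying $\sT_{_\mathrm{TDMA}}$ on the star graph when $\tau<2\delta$. The underlying idea is a symmetry/indistinguishability argument: all inter-leaf information in a star must flow through the center, so if the starting configuration pairs the $\delta$ leaves so that each pair collides at the center forever, no algorithm can produce distinct timeslots within a pair, and the resulting ``collapse'' of effective identities costs roughly twice as many slots as a plain coloring argument would suggest.

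Concretely, I would partition the $\delta$ leaves into $\lfloor\delta/2\rfloor$ matched pairs (with a singleton left over if $\delta$ is odd), and for each pair $(p_a,p_b)$ set identical initial clock values $C_a=C_b$ and identical internal variables (including the chosen-timeslot field) except for the identifier $id$. The center $p_\delta$ can be given arbitrary clock and state. I would also stagger the clocks of different pairs at sub-$\xi$ offsets (e.g.\ $\xi/2$) so that cross-pair transmissions at the center span fractional timeslot boundaries; this is what ultimately forces the frame-size requirement up from $\delta+1$ to $2\delta$.

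Next I would prove by induction on the step index $k$ a symmetry invariant: in every configuration reached from $c[x]$, the two members of each pair still have equal clocks and states that are identical up to the $id$ field. The inductive step uses two facts. First, because equal-clock leaves trigger their $timeslot(s)$ event at the same real time and, by the invariant, pick the same $s$, their transmissions overlap exactly, so Property~\ref{def:collisionProperty} applied to the center (which is in both senders' neighborhoods) says the center receives neither message and its state evolves as if the pair had been silent. Second, the environment is free to $omission$ symmetrically, so I would have the adversary deliver each center broadcast to both members of a pair or to neither; the incoming-message histories of paired leaves therefore remain identical, and their deterministic update rules produce identical next-states. Together with the sub-$\xi$ staggering across pairs, a counting argument shows that in any frame where the algorithm also tries to place the center's own broadcast (and the surviving pair-representatives) without any two transmissions overlapping at the center, strictly more than $\tau$ slots would be required whenever $\tau<2\delta$.

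Finally, because the two members of a pair are mutually in each other's $2$-neighborhood via the center, $LE_{_\mathrm{TDMA}}$ demands $s_a\neq s_b$; but the invariant forces $s_a=s_b$ in every configuration of $R$, so no configuration on $R$ is safe and the execution does not converge. The main obstacle is precisely closing this gap between the naive coloring bound and the $2\delta$ bound: the symmetry-invariant half gives only $\tau\ge\delta+1$, and upgrading it requires a careful use of the sub-timeslot clock staggering to argue that any clock adjustment that would either re-synchronize the pairs or separate their transmissions at the center must have been triggered by information the center never received, which contradicts the invariant. Showing that the adversary can sustain both the symmetry and the fractional-offset packing throughout the execution is the technical heart of the proof.
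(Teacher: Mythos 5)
Your approach is fundamentally different from the paper's, and it has a gap that I do not think can be closed. The heart of your argument is the symmetry invariant: paired leaves that start in states ``identical up to the $id$ field'' and receive identical inputs forever must keep equal clocks and equal timeslots. But the model gives every node a unique identifier that the algorithm may read, so an update rule is free to branch on $id_i$; two states that agree everywhere except the $id$ field do \emph{not} produce next states that agree up to the $id$ field (consider the trivial rule $s_i\gets id_i\bmod\tau$, or a clock adjustment $C_i\gets C_i+id_i$, neither of which needs any communication). The same problem arises for randomized algorithms, which the paper's own Algorithm~\ref{alg:timeslot} is. A second warning sign is that your concluding step uses only the invariant (``$s_a=s_b$ in every configuration, hence never safe'') and nowhere uses $\tau<2\delta$; if the invariant were sound, the identical argument would show $\sT_{_\mathrm{TDMA}}$ is unsolvable on the star for \emph{every} frame size, contradicting the paper's positive result for $\tau>\max\{4\delta,\Delta+1\}$. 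The quantitative part that is supposed to rescue the bound --- upgrading $\delta+1$ to $2\delta$ via sub-$\xi$ staggering --- is exactly the part you leave open, and it cannot be recovered while the load-bearing invariant is false.

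The paper's proof uses no symmetry at all; the obstruction is information starvation of the center, not indistinguishability of the leaves. It sets $\tau=2\delta-1$ and offsets the clock of leaf $p_i$ so that its timeslot begins $(2\xi-1)i$ clock steps after $c[x]$. Consecutive leaf transmissions (each of length $\xi$) are then separated by gaps of length $\xi-1<\xi$, and the wrap-around gap is $\delta-1<\xi$, so no interval of length $\xi$ anywhere in the frame is free of some leaf's transmission. By Property~\ref{def:collisionProperty} the environment may therefore cause \emph{every} transmission attempt of the center $p_\delta$ to fail, while the leaves' transmissions reach only $p_\delta$. Hence the leaves never receive any message, never learn anything that could change their (pairwise unsynchronized) clocks or their schedule, the pattern repeats forever, and the clock-synchronization requirement of $LE_{_\mathrm{TDMA}}$ is never met. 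The one idea from your sketch that survives is the use of sub-timeslot clock offsets to force the factor of $2$ --- that is indeed where $2\delta$ comes from (each of the $\delta$ unaligned transmissions effectively blocks two aligned slots, cf.\ Lemma~\ref{th:intervalCoverage}) --- but it must be deployed as a covering argument against the center, not as a packing argument layered on top of a symmetry invariant.
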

\begin{figure}
   \centering
   \includegraphics[width=0.4\textwidth]{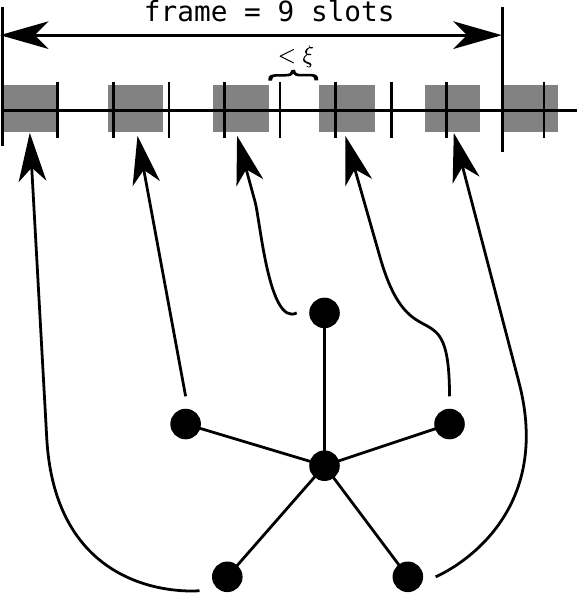}
   \caption{The outer five nodes are covering nine timeslots. The top
   horizontal line and its perpendicular marks depict the radio time division
   according to the central node, $p_{\delta}$. The gray boxes depict the
   radio time covered by the leaf nodes, $p_i \in L$. }\label{fig:star}
\end{figure}
\begin{proof}
We prove this Lemma by showing an example in which we assign timeslots to a
subset of nodes in a way, such that they block each other and, thus, disconnect
the communication graph.

Let $\tau=2\delta-1$. Let $c[x]$
be such that: (1) $C_i$ in $c[x]$ has the properties $(C_i+(2\xi-1)
i)\bmod\xi=0$ and $(C_i+(2\xi-1) i)\div\xi\bmod \tau=s_i$ for all $p_i\in
L\setminus\{p_\delta\}$, (2) $s_\delta=\bot$ and (3) there is no message in
transit. Figure~\ref{fig:star} shows such a graph for $\delta=5$.
This means the next timeslot of node $p_i\in L\setminus\{p_\delta\}$
starts $(2\xi-1)i$ clock steps after $c[x]$. The gap between the time $p_i$'s
timeslot ends and $p_{i+1}$'s timeslot starts is
$(2\xi-1)(i+1)-((2\xi-1)i+\xi)=\xi-1<\xi$ clock steps long and thus smaller
than a time slot. The gap between the next transmission of $p_{\delta-1}$ and
the next next transmission of $p_0$ is $(2\delta-1)\xi+(2\xi-1)0 -
((2\xi-1)(\delta-1)+\xi)=\delta-1<\xi$. This pattern repeats, because only
$p_\delta$ receives these messages transmitted by the leaves and $p_\delta$
does not have a time slot assigned and according to
Property~\ref{def:collisionProperty} any attempt of $p_\delta$ in transmitting
can fail. Thus, no algorithm can establish communication here. 
\end{proof}

The proof of Lemma~\ref{th:existsConfiguration} considers that case of $\tau
<2\delta$ and the star topology graph. We note that the same
scenario can be demonstrated in every graph that includes a node with the
degree $\delta$. Thus, we can establish a general proof for $\tau
<\max\{2\delta,\chi(G^2)\}$ using the arguments in the proof of
Lemma~\ref{th:intervalCoverage}, where $\chi_2$ is the chromatic number when
considering distance-$2$ coloring.

\begin{lemma}\label{th:intervalCoverage}
   Let $\xi\in\R,\tau\in\N$ and $S:=\{[a\xi,(a+1)\xi)$$:a\in$$[0,\tau-1]\}$ be
       a partition of $[0,\xi\tau)$. The intervals
           $C:=\{[b_i,b_i+\xi)$$:b_i\in$$\R\}_i$ intersects maximum $2|C|$
               elements of $S$.
\end{lemma}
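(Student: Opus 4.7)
The plan is to reduce the global bound $2|C|$ to the per-interval bound: each single element of $C$ intersects at most two elements of $S$, and then summing (or union-bounding) across the $|C|$ elements of $C$ yields the claim.

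For the per-interval step, fix $[b, b+\xi) \in C$. An element $[a\xi, (a+1)\xi) \in S$ is intersected by $[b, b+\xi)$ iff $a\xi < b+\xi$ and $(a+1)\xi > b$, which simplifies to $b/\xi - 1 < a < b/\xi + 1$. This is an open real interval of length exactly $2$, and such an interval contains at most $2$ integers. Intersecting with the constraint $a \in [0,\tau-1]$ only tightens the count, so there are at most $2$ choices of $a$ for which the corresponding element of $S$ is intersected by $[b, b+\xi)$.

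For the global step, let $T \subseteq S$ denote the set of elements of $S$ intersected by at least one element of $C$. Then $T = \bigcup_{[b,b+\xi)\in C} T_{b}$, where $T_b$ is the set of elements of $S$ intersected by $[b,b+\xi)$. By the per-interval step $|T_b| \le 2$, so $|T| \le \sum_{[b,b+\xi) \in C} |T_b| \le 2|C|$.

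The only delicate point is the length-$2$ open-interval argument: one must check that the two strict inequalities above really give an \emph{open} (not closed) interval of length $2$, since a closed interval of length $2$ could in principle contain three integers. Because the $S$-intervals are half-open $[a\xi,(a+1)\xi)$, both defining inequalities are strict, so the integer-counting bound of $2$ is tight and correct; no further case analysis on whether $b$ is aligned to a multiple of $\xi$ is needed. Everything else is bookkeeping.
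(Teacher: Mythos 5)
Your proof is correct and follows essentially the same route as the paper: a per-element bound of two cells of $S$ for each interval of $C$, summed over the $|C|$ intervals. The only difference is cosmetic---you count the integers $a$ lying in the open interval $(b/\xi-1,\,b/\xi+1)$, whereas the paper observes that any intersected cell of the partition must contain either $b$ or $b+\xi$, each of which lies in at most one cell.
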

\begin{proof}
   Suppose that $[b,b+\xi)\in C$ intersects $I:=[a\xi,(a+1)\xi)\in S$ for some
           $a$. Either $I=[b,b+\xi)$, $b\in I$ or $b+\xi\in I$. Therefore, any
               element $[b_i,b_i+\xi)$ of $C$ intersects maximum $2$ elements
                   of $S$, one that contains $b_i$ and one that contains
                   $b_i+\xi$.
\end{proof}

\Section{Self-stabilizing TDMA Allocation and Alignment Algorithm}
\label{s:alg}
We propose Algorithm~\ref{alg:timeslot} as a self-stabilizing algorithm for
the $\task$ task. The nodes transmit data packets, as well as control packets.
Data packets are sent by $\act$ nodes during their data packet timeslots. The
$\psv$ nodes listen to the $\act$ ones and do not send data packets. Both
$\act$ and $\psv$ nodes use control packets, which include the reception time
and the sender of recently received packets from direct neighbors. Each node
aggregates the frame information it receives. It uses this information for
avoiding collisions, acknowledging packet transmission and resolving hidden
node problems.
A passive node, $p_i$, can become $\act$ by selecting
random timeslots, $s_i$, that are not used by $\act$ nodes. Then $p_i$ sends a control packet
in $s_i$ and waiting for confirmation. Once $p_i$ succeeds, it becomes an
$\act$ node that uses timeslot $s_i$ for transmitting data packets. Node $p_i$
becomes $\psv$ whenever it learns about conflicts with nearby nodes, e.g., due
to a transmission failure.

The hidden node problem refers to cases in which node $p_i$ has two neighbors,
$p_j,p_k\in \delta_i$, that use intersecting timeslots. The algorithm uses
random back off techniques for resolving this problem in a way that assures at
least one successful transmission from all $\act$ and $\psv$ nodes within
$\sO(\tau)$, and respectively, $\sO(1)$ frames in expectation. The $\psv$
nodes count a random number of unused timeslots before transmitting a control
packet.
The $\act$ nodes use their clocks for defining frame numbers. They count down
only during TDMA frames whose numbers are equal to $s_i$, where $s_i \in
[0,\tau-1]$ is $p_i$'s data packet timeslot. These back off processes connect
all direct neighbors and facilitate clock synchronization, timeslot alignment
and timeslot assignment. During legal executions, in which all nodes are
$\act$, there are no collisions and each node transmits one control packet
once every $\tau$ frames.

\LinesNotNumbered\begin{algorithm*}[t!]%
   \caption{\label{alg:timeslot}Self-stabilizing TDMA Allocation, code for node $p_i$}
   $status_i\in\{\act,\psv\}$\tcc*{current node status}
   $s_i\in[0,\tau-1]$\tcc*{current data packet timeslot}
   $wait_i,waitAdd_i\in[0,maxWait]$\tcc*{current back off countdown}
      $FI_i:=\{id_k,type_k,occurrence_k,rxTime_k\}_k \subset \sF\sI$\tcc*{frame information}
      $timeOut$\tcc*{constant, age limit of elements in $FI_i$}
      $BackOff():= $ \textbf{let} $(tmp,r)$ $\gets$ $(waitAdd_i,$ $random([1,$ $3\Delta]));$ \textbf{return} $(r+tmp,$ $3\Delta-r)$\tcc*{reset backoff counter}
      $frame():=(GetClock()\div\xi\tau)\bmod\tau,$\tcc*{the current frame number}
      $Slot(t):=(t\div\xi\bmod\tau), s():=Slot(GetClock())$\tcc*{slot number
of time $t$}
      $Local(set):= \{\langle{}\bullet,\local,\bullet\rangle{}\in
      set\}$\tcc*{dist-1 neighbors in $set$}
      $Used(set):=$ %
                  $\bigcup_{\langle{} \bullet,t_k\rangle{}\in
                  set}[Slot(t_k),Slot(t_k+\xi-1)]$;
      $Unused(set):=%
      [0,\tau-1] \setminus Used(set)$\tcc*{set of (un)used slots}
      $ConflictWithNeighbors(set):=
      (\nexists_{\langle{}id_i\bullet\rangle{}\in set}\lor$
         $s_i\in[Slot(t_i),Slot(t_i+\xi)]\lor$
         $\exists_{\langle{}k,\bullet,rxTime\rangle{}\in set,k\neq id_i}:s_i\in[Slot(rxTime-t_j+t_i),Slot(rxTime-t_j+t_i+\xi)])$\tcc*{check for conflicts}
         $AddToFI(set,o) := FI_i \gets FI_i \cup \{\langle x, y, \remote , z'\rangle: \langle x, y, \bullet, z \rangle \in set, z':=(z +\max\{0,o\})\bmod c, z'\le_{timeOut}C_i\}$\tcc*{$set+FI_i$}
         $IsUnused(s):=$ $s\in Unused(FI_i)$ $\vee$ $(Unused(FI_i)= \emptyset$ $\wedge$ $s\in Unused(Local(FI_i)))$\tcc*{is $s$ an unused slot?}

\nl         \Upon\ $timeslot()$\ \Do {\label{alg:timeslot:event:timeslot}
\nl   \uIf(\tcc*[f]{send data packet}){$s()=s_i\land status_i=\act$}{\label{alg:timeslot:TDMAcond}
 $\transmit(\langle{}status_i,Local(FI_i),MAC\_fetch()\rangle{})$\label{alg:timeslot:TDMAtransmit}
   }
\nl   \ElseIf(\tcc*[f]{check if our frame}){$\lnot(status_i=\act\land frame()\neq s_i)$}{\label{alg:timeslot:CSMAcond}
\nl         \uIf(\tcc*[f]{send control packet}){$IsUnused(s())\land wait_i\le0$}{
\nl            $\transmit(\langle{}status_i,Local(FI_i),0\rangle)$\label{alg:timeslot:CSMAtransmit}\;
$\langle wait_i, waitAdd_i\rangle \gets BackOff()$\tcc*{next control packet
countdown}
\nl            \lIf{$status_i\neq\act$}{$\langle s_i,status_i\rangle\gets \langle s(),\act\rangle$}
        }
\nl        \ElseIf(\tcc*[f]{count down}){$wait_i>0 \land IsUnused((s()-1)\bmod \tau)$}{$wait_i\gets \max\{0,wait_i-1\}$\label{alg:timeslot:countdown}}
      }
      \nl      $FI_i\gets \{\langle{}\bullet,rxTime\rangle{} \in FI_i: rxTime\le_{timeOut}GetClock()\}$\label{alg:timeslot:FIcleanup}\tcc*{remove old entries from $FI_i$}
   }
\BlankLine
\nl   \Upon\ $\langle j,t_j,t_i, \langle status_j,FI_j,m^\prime \rangle\rangle \gets\receive()$\ \Do{\label{alg:timeslot:rx}
\nl     \If(\tcc*[f]{conflicts?}){$ConflictWithNeighbors(FI_j)\land status_i=\act$}{\label{alg:timeslot:detectConflict}
        $\langle{}\langle{}wait_i, waitAdd_i\rangle{},status\rangle{}\gets \langle{}BackOff(),\psv\rangle{}$\label{alg:timeslot:getpassive1}\label{alg:timeslot:getpassive0}\tcc*{get $\psv$}
     }
\nl     \uIf(\tcc*[f]{active node acknowledge}){$status_j=\act$} {\label{alg:timeslot:storePacket}
\nl          \lIf{$m'\neq\bot$}{$FI_i \gets \{\langle{}id_i,\bullet\rangle{}\in FI_i:id_i\neq j\} \cup$ $\{\langle{}j,\msg,$ $\local,$ $t_i\rangle{}\}$}\label{alg:timeslot:storePacket:data}
      }
\nl      \ElseIf(\tcc*[f]{passive node acknowledge}){$t_j= t_i\land Slot(t_j)\not\in Used(FI_i)$}{
   \nl          $FI_i \gets \{\langle{}id_i,\bullet\rangle{}\in FI_i:id_i\neq j\} \cup$ $\{\langle{}j,\welcome,$ $\local,$ $t_i\rangle{}\}$\label{alg:timeslot:FIaddWelcome}\;
      }
      \nl      \If(\tcc*[f]{converge-to-the-max}){$t_i < t_j$}{\label{alg:timeslot:checkClock}
          \nl
          $AdvanceClock(t_j-t_i)$\label{alg:timeslot:AdvanceClock}\tcc*{adjust clock}
          \nl         $FI_i\gets\{\langle{}\bullet,(rxTime+t_j-t_i)\bmod c\rangle{}:\langle{}\bullet,rxTime\rangle{}\in FI_i\}$\label{alg:timeslot:FIadvanceClock}\tcc*{shift timestamps in $FI_i$}
\nl%
$\langle{}\langle{}wait_i,waitAdd_i\rangle ,status_i\rangle{}\gets \langle{}BackOff(),\psv\rangle{}$\label{alg:timeslot:getpassive3}\tcc*{get $\psv$}
      }
\nl      $AddToFI(FI_j,t_i-t_j)$\label{alg:timeslot:AddToFI}\tcc*{Aggregate information on used timeslots}
\nl      \lIf{$m'\neq\bot$}{$MAC\_deliver(m')$}\label{alg:timeslot:rxend}
   }
\end{algorithm*}

\Subsection{Algorithm details}
The node status, $status_i$, is either $\act$ or
$\psv$. When it is $\act$, variable $s_i$ contains $p_i$ timeslot number.

The frame information is the set $FI_i$ $:=$ $\{id_k,$ $type_k,$ $occurrence_k,$
$rxTime_k\}_k$ $\subset$ $\sF\sI\:=$ $\ID\times\{\msg,$ $\welcome\}\times\{\remote,$
$\local\}\times\N$ that contains information about recently received packets,
where $\ID:=\{\bot\}\cup\N$ is the set of possible ids and the null value denoted by $\bot$.
An element of the frame information contains the id of the sender $id_k$. The
type $type_k=\msg$ indicates that the sender was active. For a passive sender
$type_k=\welcome$ indicates that there was no known conflict when this
element was added to the local frame information. If $occurrence_k=\local$,
the corresponding packet was received by $p_i$, otherwise it was copied from a
neighbor. The reception time $rxTime_k$ is the time when this packet was
received, regarding the local clock $C_i$, i.e., it is updated whenever the
local clock is updated. The algorithm considers the frame information to
select an unused timeslot. An entry in the frame information with timestamp $t$
covers the time interval $[t,t+\xi)$.

Nodes transmit control packets according to a random back off strategy for
collision avoidance. The $\psv$ node, $p_i$, chooses a random back off value,
stores it in the variable $wait_i$, and uses $wait_i$ for counting down the
number of timeslots that are available for transmissions. When $wait_i = 0$,
node $p_i$ uses the next unused timeslot according to its frame information.
During back off periods, the algorithm uses the variables $wait_i$ and
$waitAdd_i$ for counting down to zero. The process starts when node $p_i$
assigns $wait_i \gets waitAdd_i+r$, where $r$ is a random choice from
$\left[1,3\Delta \right]$, and updates $waitAdd_i \gets 3\Delta-r$, cf.
$BackOff()$.

The node clock is the basis for the frame and timeslot starting times, cf.
$frame()$, and respectively, $s()$, and also for a given timeslot number, cf.
$Slot(t)$. When working with the frame information, $set$, it is useful to
have restriction by  $\local$ occupancies, cf. $Local(set)$ and to list the
sets of used and unused timeslots, cf.
$Used(set)$, and respectively, $Unused(set)$. We check whether an arriving
frame information, $set$, conflicts with the local frame information that is
stored in $FI_i$, cf. $ConflictWithNeighbors(set)$, before merging them
together, cf. $AddToFI(set,offset)$, after updating the timestamps in $set$,
which follow the sender's clock.

Node $p_i$ can test whether the timeslot number $s$ is available according to
the frame information in $FI_i$ and $p_i$'s clock. Since
Algorithm~\ref{alg:timeslot} complements the studied lower bound
(Section~\ref{s:bsc}), the test in $IsUnused(s)$ checks whether $FI_i$ encodes
a situation in which there are no unused timeslots. In that case,
$IsUnused(s)$ tests whether we can say that $s$ is unused when considering
only transmissions of direct neighbors. The correctness proof considers the
cases in which $\tau>2\Delta$ and $\tau>\max\{4\delta,\Delta+1\}$. For the
former case, Lemma~\ref{th:PeriodFree} shows that there is always an unused
timeslot $s^\prime$ that is not used by any neighbor $p_j\in \Delta_i$,
whereas for the latter case, Lemma~\ref{th:PeriodLocalFree} shows that for any
neighbor $p_j\in \delta_i$, there is a timeslot $s^{\prime\prime}$ for which
there is no node $p_k \in \delta_i\cup\delta_j\cup\{p_j,p_i\}$ that transmits
during $s^{\prime\prime}$.

The code of Algorithm~\ref{alg:timeslot} considers two events: (1) periodic timeslots
(line~\ref{alg:timeslot:event:timeslot}) and (2) reception of a
packet (line~\ref{alg:timeslot:rx}).

\SUBsubsection{{\em (1)} $timeslot()$, line~\ref{alg:timeslot:event:timeslot}:}
Actives nodes transmit their data packets upon their timeslot
(line~\ref{alg:timeslot:TDMAtransmit}). Passive nodes
transmit control packets when the back off counter, $wait_i$, reaches zero
(line~\ref{alg:timeslot:CSMAtransmit}).
Note that $\psv$ nodes count only when the local frame information says that
the previous timeslot was unused (line~\ref{alg:timeslot:countdown}). Active
nodes also send control packets, but rather than counting all unused timeslots,
they count only the unused timeslots that belong to frames with a number that
matches the timeslot number, i.e., $frame()=s_i$
(line~\ref{alg:timeslot:CSMAcond}).

\SUBsubsection{{\em (2)} $\receive()$, line~\ref{alg:timeslot:rx}:}
Active nodes, $p_i$, become $\psv$ when they identify conflicts in $FI_j$
between their data packet timeslots, $s_i$, and data packet timeslots, $s_j$
of other nodes $p_j \in \Delta_i$ (line~\ref{alg:timeslot:detectConflict}).
When the sender is
$\act$, the receiver records the related frame information. Note that the
payload of data packets is not empty in
line~\ref{alg:timeslot:storePacket:data}, c.f., $m^\prime \neq \bot$. Passive nodes, $p_j$, aim to
become $\act$. In order to do that, they need to send a control packet
during a timeslot that all nearby nodes, $p_i$, view as unused, i.e.,
$Slot(t)\not\in Used(FI_i)$, where $t$ is the packet sending time.
Therefore, when the sender is $\psv$, and its data packet timeslots are aligned, i.e.,
$t_i=t_j$, node $p_i$ welcomes $p_i$'s control packet whenever $Slot(t_j)\not\in Used(FI_i)$.
Algorithm~\ref{alg:timeslot} uses a self-stabilizing clock synchronization
algorithm that is based on the converge-to-the-max
principle. When the sender clock value is higher
(line~\ref{alg:timeslot:checkClock}), the receiver adjusts its clock value and
the timestamps in the frame information set, before validating its timeslot, $s_i$, (lines~\ref{alg:timeslot:AdvanceClock}
to~\ref{alg:timeslot:getpassive3}). The receiver can now use the sender's
frame information and payload (lines~\ref{alg:timeslot:AddToFI} to~\ref{alg:timeslot:rxend}).

\Subsection{Correctness}
\label{s:cor}
The proof of
Theorem~\ref{th:selfStab} starts by showing the existence of unused
timeslots by considering the cases in which $\tau>2\Delta$ and
$\tau>\max\{4\delta,\Delta+1\}$ (Lemmas~\ref{th:PeriodFree}, and
respectively,~\ref{th:PeriodLocalFree}). This
facilitates the proof of network connectivity (Lemma~\ref{th:CSMAcomm}), clock
synchronization (Theorem~\ref{th:ConvergeToMax}) and bandwidth allocation
(Theorem~\ref{th:TDMAalloc}).

\begin{theorem}\label{th:selfStab}
   Algorithm~\ref{alg:timeslot} is a self-stabilizing implementation of task
   $\task$ that converges within $\sO(\diam\cdot\tau^2\delta + \tau^4\delta^2)$
   starting from an arbitrary configuration. In case the system happens to have
   access to external time references, i.e., start from a configuration in
   which clocks are synchronized, the convergence time is within
   $\sO(\tau^4)$, and $\sO(\tau^4\delta^2)$ steps when $\tau>2\Delta$, and
   respectively, $\tau>\max\{4\delta,\Delta+1\}$.
\end{theorem}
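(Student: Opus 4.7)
The plan is to prove convergence in stages following the roadmap laid out before the theorem: first establish that control packets have opportunities to be delivered, then use these transmissions to drive clock synchronization, and finally argue that timeslot assignment converges. I would separate the analysis into the two regimes $\tau>2\Delta$ and $\tau>\max\{4\delta,\Delta+1\}$ since the enabling combinatorial lemma differs.

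First I would prove the auxiliary lemmas on unused timeslots. Lemma~\ref{th:PeriodFree} (for $\tau>2\Delta$) shows that at every node $p_i$ there is always a timeslot unused by any $p_j\in\Delta_i$; Lemma~\ref{th:PeriodLocalFree} (for $\tau>\max\{4\delta,\Delta+1\}$) shows that for every direct neighbor $p_j\in\delta_i$ there is a timeslot during which no node in $\delta_i\cup\delta_j\cup\{p_i,p_j\}$ transmits. Both follow from Lemma~\ref{th:intervalCoverage} applied to the transmissions recorded in $FI_i$: since each transmitter's $\xi$-interval covers at most two timeslots, the total count of blocked timeslots is bounded by $2\Delta$, resp.\ $2\delta+2\delta$, which is strictly less than $\tau$. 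This validates the two branches of $IsUnused(\cdot)$ used by the algorithm and feeds into Lemma~\ref{th:CSMAcomm}, which I would use to argue that every $\psv$ node transmits a control packet that is received by at least one specific neighbor within an expected $\sO(\tau\delta)$ timeslots, bounded by the range of $BackOff()$ and Property~\ref{def:collisionProperty}.

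Next I would handle clock synchronization by proving Theorem~\ref{th:ConvergeToMax} via the converge-to-the-max principle on line~\ref{alg:timeslot:AdvanceClock}. Using Lemma~\ref{th:CSMAcomm}, the node holding the current maximum clock propagates its value to at least one neighbor within $\sO(\tau^2\delta)$ steps; iterating this over a BFS spanning tree yields the $\sO(\diam\cdot\tau^2\delta)$ term. The key technical point is that once $t_i\ge t_j$ holds uniformly along every edge, the clock-update branch is never taken again, so clocks remain synchronized and no further forced transitions to $\psv$ occur through line~\ref{alg:timeslot:getpassive3}. From this point, frame numbers are globally consistent.

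Finally, with clocks synchronized, I would prove Theorem~\ref{th:TDMAalloc} on bandwidth allocation in $\sO(\tau^4\delta^2)$ steps (resp.\ $\sO(\tau^4)$ when $\tau>2\Delta$). The argument is monotone: any $\act$ node that still conflicts within $\Delta_i$ is detected by $ConflictWithNeighbors$ (line~\ref{alg:timeslot:detectConflict}) and demoted to $\psv$; a $\psv$ node, using the back off and $IsUnused$ test, transmits during a timeslot unused in its $2$-neighborhood, so the $Slot(t_j)\not\in Used(FI_i)$ test at its neighbors triggers a welcome and the node becomes $\act$ with a unique slot. Summing expected $\sO(\tau\delta)$ per back off round, $\sO(\tau)$ rounds per node, and $\sO(\tau)$ distinct timeslots that must stabilize in the neighborhood yields the quoted bounds; adding the synchronization term gives the overall $\sO(\diam\cdot\tau^2\delta+\tau^4\delta^2)$ bound. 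The main obstacle I expect is controlling interference between the phases: specifically, ensuring that once a node becomes $\act$ at a valid timeslot, it is not spuriously demoted by a late-arriving packet (e.g., stale frame information carried across the synchronization phase). I would address this by showing that stale entries in $FI_i$ are purged by line~\ref{alg:timeslot:FIcleanup} within the $timeOut$ horizon after the last clock adjustment, so the $ConflictWithNeighbors$ test returns true only for genuine $2$-hop conflicts, preserving monotone progress.
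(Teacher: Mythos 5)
Your proposal follows essentially the same decomposition as the paper's own proof: the unused-timeslot lemmas via the interval-coverage argument, network connectivity and expected communication delay ($\gamma$), converge-to-the-max clock synchronization within $\sO(\gamma\cdot\diam)$ frames, timeslot allocation within $\sO(\gamma^2)$ frames, and finally the closure/safety argument for the reached configuration. The only substantive point you gloss over is the bounded-clock wrap-around analysis inside Theorem~\ref{th:ConvergeToMax} (the paper needs a three-case argument there, since clocks are modular with $c$ states), but at the level of the top-level composition your route matches the paper's.
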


The proof considers the following definitions.
Given a configuration $c$, we denote by $\sA(c)=\{ p_i\in P: status_i=\act \wedge
wait_i = 0 \}$ the set of active nodes, and by $\sP(c) = P
\setminus \sA(c)$ the set of passive ones.
A \emph{frame} for a node $p_i\in P$ is the time between two successive events of
$timeSlot(s)$ with $s=0$. Note that these frames depend on $C_i$ and, thus,
might not be alinged between nodes.
Communication among neighbors is possible only when there are timeslots that
are free from transmissions by nodes in the local neighborhood.
Lemma~\ref{th:PeriodFree} assumes  that $\tau>2\Delta$ and shows that every
node, $p_i\in P$, has an unused timeslot, $s$, with respect to $p_i$'s clock.
This satisfies the conditions of Property~\ref{def:collisionProperty} with
respect to {\em all} of $p_i$'s neighbors $p_j \in \delta_i$. The proof
considers the definitions of the start and the end of frames and timeslots, as
well as unused timeslots. A configuration, $c^{FrameStart}_i[x_{\ell}] =
c[x]$, in which $GetClock_i()\bmod(\xi\tau)=0$ holds, marks the start of one
of $p_i$'s frames. This frame ends when the next frame starts, i.e., the next
configuration $c^{FrameStart}_i[x_{\ell+1}]$. A timeslot of $p_i$ is,
respectively, bounded by two successive configurations
$c^{TimeSlot}_i[x_{\ell}]$ and
$c^{TimeSlot}_i[x_{\ell+1}]$, such that in those configurations
$GetClock_i()\bmod\xi=0$ holds. The slot number for this timeslot is given as
$GetClock_i()\div\xi\bmod\tau$ at configuration $c^{TimeSlot}_i[x_{\ell}]$. Given
execution $R$, we denote a timeslot starting at $c^{TimeSlot}_i[x_{\ell}]$ as
unused if there is no active node $p_j\in\Delta_i$ exists such that it has in
$R$ an intersecting data packet timeslot. Namely, there is no configuration
$c^{TimeSlot}_j[x'_{\ell}]$ in $R$ with slot number
$GetClock_i()\div\xi\bmod\tau=s_j$ occurs before $c^{TimeSlot}_i[x_{\ell+1}]$ and
a configuration $c^{TimeSlot}_j[x'_{\ell+1}]$ occurs after
$c^{TimeSlot}_i[x_{\ell}]$.
\begin{lemma}
\label{th:PeriodFree}
   Suppose that $\tau>2\Delta$ and $p_i\in P$. Let $R$ be an
   execution of  Algorithm~\ref{alg:timeslot} that includes a complete frame
   start with respect to $p_i$'s clock. Between any two successive frame
   starts, $c^{FrameStart}_i[x_\ell]$ and $c^{FrameStart}_i[x_{\ell+1}]$,
   there is at least one unused timeslot.
\end{lemma}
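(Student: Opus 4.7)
The plan is a direct counting argument anchored in Lemma~\ref{th:intervalCoverage}. I fix the window $W := [a, a + \xi\tau)$ given by the two successive frame starts $c^{FrameStart}_i[x_\ell]$ and $c^{FrameStart}_i[x_{\ell+1}]$; this window is partitioned by $p_i$'s clock into exactly $\tau$ timeslots of length $\xi$, which plays the role of the partition $S$ in Lemma~\ref{th:intervalCoverage}. A slot in $W$ is ``unused'' (per the definition preceding the statement) precisely when no data-packet transmission of any active node $p_j\in\Delta_i$ intersects it. My goal is to upper-bound the number of blocked slots by $2\Delta$, so that strictly fewer than $\tau$ slots are blocked and at least one remains.

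The central step is the per-neighbor bound. Since the paper assumes zero clock skew, each active $p_j$ transmits its data packet periodically with period $\xi\tau$, producing a sequence of $\xi$-intervals at positions $\ldots, r-\xi\tau, r, r+\xi\tau, \ldots$ for some offset $r$. In the generic case a single such transmission lies entirely inside $W$, and Lemma~\ref{th:intervalCoverage} then caps the number of $p_i$'s slots it meets at $2$. In the boundary case two consecutive transmissions of $p_j$ touch $W$: one straddles $a$ and contributes only to the first slot of $W$, the next straddles $a+\xi\tau$ and contributes only to the last slot. Either way, $p_j$ blocks at most two of $p_i$'s slots in $W$.

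Taking the union over active neighbors gives at most $2\,|\Delta_i|\leq 2\Delta$ blocked slots, and the hypothesis $\tau>2\Delta$ forces some slot of $W$ to be unblocked, hence unused, completing the proof.

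The main obstacle I anticipate is the bookkeeping for the boundary case: one must carefully check that a neighbor whose transmissions straddle both ends of $W$ still blocks only two of $p_i$'s slots (one at each end), rather than up to four via a naive double application of Lemma~\ref{th:intervalCoverage} to each partial interval. A clean alternative is to reduce $p_j$'s transmissions modulo $\xi\tau$ to a single cyclic $\xi$-interval and apply a cyclic analogue of Lemma~\ref{th:intervalCoverage}; I prefer the direct split since it stays within the exact statement of Lemma~\ref{th:intervalCoverage} as formulated.
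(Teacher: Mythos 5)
Your proof is correct and follows essentially the same route as the paper's: both reduce the claim to Lemma~\ref{th:intervalCoverage} by treating $p_i$'s frame as the partition $S$ and the data-packet transmissions of the at most $\Delta$ active nodes in $\Delta_i$ as the intervals $C$, so at most $2\Delta<\tau$ slots are blocked and one remains unused. Your explicit treatment of the boundary case, where a neighbor's periodic transmissions straddle both ends of the window yet still block only two slots, is a careful refinement of a point the paper's proof passes over silently.
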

\begin{proof}
Let us consider all the configurations, $c^\prime$ between
$c^{FrameStart}_i[x_\ell]$, and $c^{FrameStart}_i[x_{\ell+1}]$. Let $S$ be the
partition of $p_i$'s frame in $\tau$ timeslots of length $\xi$ and $C$ be the
maximal set of data packet timeslots of $\act$ nodes $p_j\in \Delta_i$ (and
their respective clocks, $C_j$). We show that $\tau>2\Delta$ implies the
existence of at least one unused timeslot between $c^{FrameStart}_i[x_\ell]$
and $c^{FrameStart}_i[x_{\ell+1}]$, by requiring that $C$'s elements cannot
interest all $\tau$ elements in $S$.
The nodes $p_j$ periodically transmit a data packet once every $\tau$
timeslots (line~\ref{alg:timeslot:TDMAcond}). Note that there are at most
$\Delta$ $\act$ nodes, $p_j$, in all (possibly arbitrary) configurations
$c^\prime$. Namely, every $p_j$ has a single data packet timeslot, $s_j$, but
$s_j$'s timing is arbitrary with respect to $p_i$'s clock. By the proof of
Lemma~\ref{th:intervalCoverage}, $C$ interests maximum $2|C|$ elements of the
set $S$. Since $|S|=\tau$, $|C|\leq\Delta$, and the assumption that $C$'s
elements cannot interest all elements in $S$, we have $\tau>2\Delta$ implies
the existence of at least one unused timeslot between
$c^{FrameStart}_i[x_\ell]$ and $c^{FrameStart}_i[x_{\ell+1}]$.
\end{proof}
Lemma~\ref{th:PeriodFree} is basically the application of
Lemma~\ref{th:intervalCoverage}, were we identify the timeslots with
intervals. 
Lemma~\ref{th:PeriodLocalFree} extends Lemma~\ref{th:PeriodFree} by assuming
that $\tau >\max\{4\delta,\Delta+1\}$ and showing that every node, $p_i\in P$, has an unused
timeslot, $s$, with respect to $p_i$'s clock. This satisfies the conditions of
Property~\ref{def:collisionProperty} with respect to {\em one} of $p_i$'s
neighbors $p_j \in \delta_i$, rather than all $p_i$'s neighbors $p_j \in
\delta_i$, as in the proof of Lemma~\ref{th:PeriodFree}.
\begin{lemma}\label{th:PeriodLocalFree}
   Suppose $\tau >\max\{4\delta,\Delta+1\}$, $p_i\in P$ and $p_j\in\delta_i$. Let $R$ be an
   execution of  Algorithm~\ref{alg:timeslot} that includes a complete frame
   with respect to $p_i$'s clock. With respect to $p_i$'s clock, between any
   two successive frame starts, $c^{FrameStart}_i[x_\ell]$ and
   $c^{FrameStart}_i[x_{\ell+1}]$, there is at least one timeslot that is
   unused by any of the nodes $p_k \in \delta_i\cup\delta_j\cup\{p_j,p_i\}$.
   \end{lemma}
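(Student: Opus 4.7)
The plan is to mirror the argument of Lemma~\ref{th:PeriodFree}, applying Lemma~\ref{th:intervalCoverage} to the node set that is relevant to the pair $(p_i, p_j)$ rather than to $\Delta_i$. As before, I would let $S$ be the partition of $p_i$'s frame between $c^{FrameStart}_i[x_\ell]$ and $c^{FrameStart}_i[x_{\ell+1}]$ into the $\tau$ timeslots of length $\xi$ induced by $C_i$, and let $C$ be the collection of data packet intervals, measured with respect to each sender's own clock, contributed by active nodes $p_k \in \delta_i \cup \delta_j \cup \{p_i, p_j\}$ whose transmissions fall inside $p_i$'s frame window.

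First I would bound $|C|$. Because $p_j \in \delta_i$ and $p_i \in \delta_j$, the set $\delta_i \cup \delta_j \cup \{p_i, p_j\}$ already coincides with $\delta_i \cup \delta_j$, whose cardinality is at most $|\delta_i| + |\delta_j| \le 2\delta$. By the transmission rule on line~\ref{alg:timeslot:TDMAcond}, each active node transmits a data packet once per $\tau$ timeslots of its own clock, so during a length-$\tau\xi$ window each such node contributes at most a single interval of length $\xi$ to $C$. Hence $|C| \le 2\delta$. Applying Lemma~\ref{th:intervalCoverage} with this $S$ and $C$, the intervals of $C$ meet at most $2|C| \le 4\delta$ elements of $S$; since $\tau > 4\delta$, at least one timeslot of $S$ is disjoint from every interval in $C$, and this is exactly a timeslot unused by every $p_k \in \delta_i \cup \delta_j \cup \{p_i, p_j\}$ in the sense carried over from Lemma~\ref{th:PeriodFree}.

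The main subtlety I anticipate is bookkeeping around the neighbor set: one must observe that $\{p_i, p_j\} \subseteq \delta_i \cup \delta_j$ so that the cardinality bound stays at $2\delta$ rather than $2\delta + 2$, and that $p_i$'s own data packet timeslot (if $p_i$ is active) is already absorbed in $C$ at no extra cost. The second disjunct $\tau > \Delta + 1$ in the hypothesis does not enter this counting argument; it is needed elsewhere in the analysis of Algorithm~\ref{alg:timeslot} but plays no role in the existence claim proved here, which relies only on $\tau > 4\delta$.
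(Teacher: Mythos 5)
Your proposal is correct and follows essentially the same route as the paper's own proof: both reduce the claim to Lemma~\ref{th:intervalCoverage} with $S$ the $\tau$-slot partition of $p_i$'s frame and $C$ the data packet intervals of active nodes in $\delta_i\cup\delta_j\cup\{p_i,p_j\}$, bounding $|C|\le 2\delta$ so that at most $4\delta<\tau$ slots are covered. Your bookkeeping (absorbing $p_i,p_j$ into $\delta_i\cup\delta_j$, and noting that the $\tau>\Delta+1$ disjunct is not used in this counting) is in fact cleaner than the paper's somewhat garbled statement of the set $C$.
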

\begin{proof} Let $C$ be the maximum set of data packet timeslots of $\act$
    nodes $p_j\in \delta_i\cap\delta_j\cap\{p_j\}$ (and their respective
    clocks, $C_j$). The proof follows by arguments similar to
    those of Lemma~\ref{th:PeriodFree}. We show that $\tau
    >\max\{4\delta,\Delta+1\}$ implies the existence of at least one timeslot that is unused by
    any of the nodes $p_k \in \delta_i\cup\delta_j\cup\{p_j,p_i\}$ between
    $c^{FrameStart}_i[x_\ell]$ and $c^{FrameStart}_i[x_{\ell+1}]$, by
    requiring that $C$'s elements cannot interest all $\tau$ elements in $S$,
    c.f., proof of Lemma~\ref{th:PeriodFree} for $S$'s definition.
By the proof of Lemma~\ref{th:intervalCoverage}, $C$ interests maximum $2|C|$
elements of the set $S$. Since $|S|=\tau >\max\{4\delta,\Delta+1\}$, $|C|\leq 2\delta$, and the
assumption that $C$'s elements cannot interest all elements in $S$, we have
$\tau >\max\{4\delta,\Delta+1\}$ implies the existence of at least one unused timeslot between
$c^{FrameStart}_i[x_\ell]$ and $c^{FrameStart}_i[x_{\ell+1}]$.
\end{proof}
Lemma~\ref{th:CSMAcomm} shows that the control packet exchange provides
network connectivity.
Recall that Lemmas~\ref{th:PeriodFree}
and~\ref{th:PeriodLocalFree} imply that there is a single timeslot, $s$, that
is unused with respect to the clocks of node $p_i$ and {\em all},
respectively, {\em one} of $p_i$'s neighbors. Lemmas~\ref{th:PeriodFree}
and~\ref{th:PeriodLocalFree} refer to the cases when $\tau > 2\Delta$ and
$\tau > \max\{4\delta,\Delta+1\}$ for which Lemma~\ref{th:CSMAcomm} shows that the
communication delay (during convergence) of the former case is $\delta$ times
shorter than the latter. The proof shows that we can apply the analysis
of~\cite{Hoepman:2011}, because the back off process of a $\psv$ node counts
$r$ unused timeslots, where $r$ is a random choice in $[1,3\Delta]$. The lemma
statement denotes the latency period by $\ell:=(1-e^{-1})^{-1}$.

\begin{dfn}\label{dfn:clockRelation}

   Let $c \in R$ be a configuration and $p_i, p_j \in P: p_j \in \delta_i$ two
   neighbors. We say that the tuple $ e:=\langle j,
   \bullet,\local,time\rangle\in FI_i$ is {\em locally steady} in $c$ when
   $((C_j+\tau\xi-(time\bmod\tau\xi))\div\xi)\div\tau=s_j$. For the case of
   $p_i, p_j, p_k \in P: p_k \in \delta_i \land p_j \in \delta_k\ $ we say
   that  $ e:=\langle j, \bullet,\remote,time\rangle\in FI_i$ is {\em remotely steady}
   in $c$ when $((C_j+\tau\xi-(time\bmod\tau\xi))\div\xi)\div\tau=s_j$.
   A tuple $e\in FI_i$ is {\em stale} if $e$ is neither locally steady, nor
   steady.
\end{dfn}
A frame information set $FI_i$ is locally consistent if all $\local$ entries
can be used to predict a transmission of a node in $\delta_i$. A locally consistent frame
information set $FI_j$ is consistent if all $\remote$ entries can be used to
predict transmissions of nodes in $\Delta_i\setminus\delta_i$.
\begin{lemma}\label{th:CSMAcomm}
   Let $R$ be an execution of Algorithm~\ref{alg:timeslot} that starts from an
   arbitrary configuration $c[x]$. Then there is a suffix $R'$ of $R$ that starts
   from a configuration $c[x+\sO(timeOut)]$ such that a node $p_i\in\sP$
   receives a message from all nodes in $\delta_i$ within finite time.
\end{lemma}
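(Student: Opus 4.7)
The plan is to exploit two facts: that line~\ref{alg:timeslot:FIcleanup} purges any entry of $FI_i$ older than $timeOut$, and that Lemmas~\ref{th:PeriodFree} and~\ref{th:PeriodLocalFree} guarantee the existence of an unused timeslot within every frame. After $\sO(timeOut)$ steps from the arbitrary starting configuration $c[x]$, every remaining entry in every node's $FI$ was inserted during $R$ itself, so the $IsUnused$ predicate faithfully reflects recent channel activity rather than garbage inherited from $c[x]$. This is why the suffix $R'$ begins after $\sO(timeOut)$ steps rather than immediately.

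Next, I would fix the passive node $p_i$ and a neighbor $p_j \in \delta_i$, and split on the status of $p_j$ in $R'$. If $p_j$ is active, then $p_j$ transmits a data packet in its slot $s_j$ every frame (line~\ref{alg:timeslot:TDMAtransmit}); if $p_j$ is passive, then its back off counter $wait_j$ decrements once per unused timeslot (line~\ref{alg:timeslot:countdown}) and triggers a control-packet transmission upon reaching zero (line~\ref{alg:timeslot:CSMAtransmit}). In either case $p_j$ transmits regularly. By Lemma~\ref{th:PeriodLocalFree}, within each frame of $p_i$'s clock there is at least one timeslot free of transmissions from $\delta_i \cup \delta_j \cup \{p_i, p_j\}$, so there is a positive-probability event per frame in which $p_j$'s transmission falls inside this safe timeslot and, by Property~\ref{def:collisionProperty}, is received by $p_i$.

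The probabilistic accounting then reduces to the analysis of~\cite{Hoepman:2011}: since $BackOff()$ samples $r \in [1, 3\Delta]$ uniformly and at most $\Delta$ nearby nodes compete for the same unused timeslots, the per-frame success probability is bounded below by a positive constant, so a geometric argument yields successful reception within the $\ell = (1 - e^{-1})^{-1}$ attempts announced in the discussion preceding the lemma. Iterating this bound over all of $p_i$'s at most $\delta$ neighbors delivers the ``within finite time'' conclusion.

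The hard part will be disentangling the random choices across nodes whose clocks are still unaligned at the start of $R'$: the set of ``unused'' timeslots differs between $p_i$ and $p_j$ until the converge-to-the-max synchronization (Theorem~\ref{th:ConvergeToMax}) has taken effect, and the safe timeslot of Lemma~\ref{th:PeriodLocalFree} is defined with respect to $p_i$'s clock alone. I would handle this by observing that $p_j$'s uniform choice in $BackOff()$ is independent of the clock offset, so the landing probability into the safe timeslot stays bounded below by a positive constant irrespective of the offset. A secondary subtlety is that an active neighbor detecting a conflict becomes passive via line~\ref{alg:timeslot:getpassive0}, so if an active $p_j$ persistently collides the analysis ultimately reduces to the passive-transmission case.
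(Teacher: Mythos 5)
Your overall route is the same as the paper's: wait $\sO(timeOut)$ steps so that the frame information no longer blocks the free timeslot guaranteed by Lemmas~\ref{th:PeriodFree} and~\ref{th:PeriodLocalFree}, then argue that regular (data or back-off-driven control) transmissions of each neighbor eventually land in that timeslot and are received by Property~\ref{def:collisionProperty}. The probabilistic accounting in your second and third paragraphs actually belongs to Lemma~\ref{th:CSMAcomm2}; for the present lemma only the qualitative ``within finite time'' conclusion is needed.

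The genuine gap is in your first paragraph, which is where the paper spends essentially its entire proof. You assert that after $\sO(timeOut)$ steps ``every remaining entry in every node's $FI$ was inserted during $R$ itself, so the $IsUnused$ predicate faithfully reflects recent channel activity.'' That implication does not follow as stated, for two reasons. First, line~\ref{alg:timeslot:AddToFI} copies entries from a neighbor's frame information into $FI_i$ with a translated timestamp; an entry inserted during $R$ in this way can be a copy of garbage inherited from $c[x]$ (possibly relayed through several nodes), so ``inserted during $R$'' does not mean ``backed by a real transmission.'' Second, line~\ref{alg:timeslot:FIadvanceClock} shifts every $rxTime$ forward whenever the converge-to-the-max rule advances the clock, so it is not immediate that the purge in line~\ref{alg:timeslot:FIcleanup} ever fires on a given entry --- a priori its timestamp could keep being pushed toward the current clock value. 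The paper closes both holes with one invariant: the \emph{age} of an entry, i.e.\ the difference between the local clock and the entry's timestamp, is preserved both by the clock-adjustment shift and by the offset applied when copying between neighbors, and therefore increases monotonically with real time. Only with this invariant does the $timeOut$ mechanism guarantee that every entry not refreshed by an actual reception disappears within $\sO(timeOut)$ steps, which is what lets $p_i$ recognize the free timeslot of Lemmas~\ref{th:PeriodFree} and~\ref{th:PeriodLocalFree} as unused in $FI_i$. Your proposal needs this argument (or an equivalent one) to justify the $c[x+\sO(timeOut)]$ claim; the rest of your sketch is consistent with the paper.
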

\begin{proof}
   We show that every execution reaches a configuration $c''$ such that in the
   suffix $R'$ of $R$ starting from $c''$ neighbors can exchange packets within
    finite time.
   Lemmas~\ref{th:PeriodFree} and~\ref{th:PeriodLocalFree} show the existence
   of a free time slot for all nodes and their clocks. Then we show that it is
   also marked as free in the node's frame information, since stale entries
   might block it. Therefore, we study the behavior of elements in the frame
   information set $FI_i$ during the different clock update steps.
   Furthermore, we show how elements in the frame information set are
   propagated between neighboring nodes. We see how the distance in time from
   each entry to the current time of a node $C_i$ increases monotonically.
   Thus, a all elements are deleted after $C_i$ advances by $timeOut$ clock
   steps. Additionally, after all stale entries are deleted there is maximal
   one entry in $FI_i$ for each neighbor $p_j\in\Delta_i$. Thus, $p_i$ sees
   the free time slot.

   Note that a node $p_i$ adds only elements $e:=\langle
   id,type,occurrence,rxTime\rangle\in\sF\sI$ to $FI_i$ within the
   lines~\ref{alg:timeslot:storePacket},~\ref{alg:timeslot:FIaddWelcome}
   and~\ref{alg:timeslot:AddToFI}. If we add a direct neighbor in
   line~\ref{alg:timeslot:FIaddWelcome} and~\ref{alg:timeslot:storePacket}
   $rxTime=C_i$ holds.  In latter case, line~\ref{alg:timeslot:AddToFI}, we
   copy entries $e_i:=\langle k,\bullet,\remote,rxTime_i\rangle$ from an
   entries $e_j:=\langle k,\bullet,\remote,rxTime_j\rangle$ direct neighbor
   $p_j\in\delta_i$. But here we update the $rxTime_j$ to $rxTime_i$ according to $p_i$'s
   clock $C_i$ and thus $C_j-rxTime_j\bmod c=C_i-rxTime\bmod c$.

   Let $R$ be a run starting from a configuration $c$, such that $p_i$ has a
   stale entry $e:=\langle j,\bullet,t\rangle\in FI_i$. There are three cases that
   can occur. (1) $p_i$ does not receive a packet during the next
   $timeOut-t+\tau$ clock steps either from $p_j$ directly, or from $p_k$ that
   contains an entry with the id $j$ in the frame information and with a
   smaller clock difference. In this case
   their exists a configuration $c'$ in $R$ such that $e\not\in FI_i(c')$ and
   $0<(C_i(c')-C_i(c)-timeOut+t)\bmod c<\tau$. (2) $p_i$ receives a packet
   from $p_j$ and thus the stale entry $e$ is replaced by a stable entry. (3)
   $p_i$ receives a packet from $p_k$ in configuration $c'$ that contains an entry $e':=\langle
   j,bullet,t'\rangle$, such that $(C_i(c')-t)\bmod c>(C_j(c')-t'$. In any
   case $e'$ replaces $e$. If case $e'$ is stable, the stale entry $e$ is replaced by $e'$. Otherwise $e$
   is replaced by the stale entry $e'$. But note that while replacing $e$ with
   $e'$ the age, i.e., the difference to the nodes clock is maintained. This
   means that the time until it is removed by time out is for $p_i$ and $p_j$
   the same.

   It follows that in a configuration $c''$ that is more then $timeOut$ clock
   steps after $c$, there are only stale entries due to nodes in the
   distance-2 neighborhood that got passive recently, i.e. at most
   $timeOut+\xi$ clock ticks ago. Thus, the free time slot for a node $p_i$,
   that exists by Lemma~\ref{th:PeriodFree} and~\ref{th:PeriodLocalFree}, is
   also free according to $FI_i$. This means $p_i$ can communicate with all
   neighbors in $\delta_i$ within finite time.
\end{proof}
After showing the network connectivity in Lemma~\ref{th:CSMAcomm}, we
continue with bounding the expected communication delay in
Lemma~\ref{th:CSMAcomm2} i.e., how long does it take to successfully send a
control packet to a neighbor.
\begin{lemma}
\label{th:CSMAcomm2}
   Let $R$ be an execution of Algorithm~\ref{alg:timeslot} that starts from an
   configuration $c[x]$ such that a node $p_i\in\sP$ continuously
   receives messages from all nodes in $\delta_i$ within finite time.
   Then every expected $\beta$ frames in $R'$,
   node $p_i\in P$ receives at least one message from all direct $\psv$
   neighbors, $p_j\in\delta_i$, where $\beta$ is $3\Delta\ell$ frames if
   $2\Delta<\tau$, or $3\Delta\ell\delta$ frames if $\tau >\max\{4\delta,\Delta+1\}$. Moreover,
   every expected $\gamma$ frames, $p_i$ receives at least one message from
   all $\act$ neighbor, $p_k\in\delta_i$, where $\gamma$ is
   $3\Delta\tau\ell$ frames if $2\Delta<\tau$, or $3\Delta\tau\ell\delta$ frames
   if $\tau >\max\{4\delta,\Delta+1\}$.
\end{lemma}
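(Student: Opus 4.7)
The plan is to decompose each expected-delay bound as the product of two quantities: how frequently a given neighbor attempts a control-packet transmission, and the probability that such an attempt actually reaches $p_i$ without collision. Both factors come directly out of the random back-off mechanism implemented by $BackOff()$.

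First, I would restrict attention to the suffix of $R$ guaranteed by the hypothesis (itself delivered by Lemma~\ref{th:CSMAcomm}), in which every frame-information set is already locally consistent, so that $IsUnused(s)$ faithfully identifies slots that are safe for contention. In this suffix, Lemmas~\ref{th:PeriodFree} and~\ref{th:PeriodLocalFree} guarantee at least one unused slot per frame: with respect to the entire $2$-neighborhood when $\tau>2\Delta$, and with respect to a single distinguished direct neighbor when $\tau>\max\{4\delta,\Delta+1\}$.

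For a $\psv$ neighbor $p_j\in\delta_i$, each expiry of $wait_j$ triggers a fresh draw $r\in[1,3\Delta]$ uniformly at random via $BackOff()$, and then $wait_j$ is decremented once per observed unused slot (line~\ref{alg:timeslot:countdown}). Under $\tau>2\Delta$, at least one such decrement occurs per frame by Lemma~\ref{th:PeriodFree}, so $p_j$'s inter-attempt spacing is at most $3\Delta$ frames. I would then invoke the analysis of~\cite{Hoepman:2011}, under which each attempt avoids collision with probability at least $1-e^{-1}=1/\ell$; an expected $\ell$ attempts therefore suffice, giving $\beta=3\Delta\ell$. Under $\tau>\max\{4\delta,\Delta+1\}$, Lemma~\ref{th:PeriodLocalFree} only supplies a slot that is safe with respect to a single direct neighbor of $p_j$, so the probability that $p_j$'s contended slot is in fact safe from $p_i$'s viewpoint drops by a factor of $\Theta(\delta)$. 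This inflates the expected number of attempts by $\delta$ and yields $\beta=3\Delta\ell\delta$.

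For an $\act$ neighbor $p_k$, control packets are transmitted only in frames whose number equals $s_k$ (line~\ref{alg:timeslot:CSMAcond}), so only one frame in every $\tau$ is eligible for the countdown. Running the previous argument on this sparser eligible subsequence multiplies both bounds by $\tau$, producing $\gamma=3\Delta\tau\ell$ and $\gamma=3\Delta\tau\ell\delta$ respectively. The main obstacle I foresee is justifying the independence hypotheses that the Hoepman-style success-probability bound requires — specifically, verifying that residual stale entries in competing $FI$ sets and the clock re-alignments performed during a countdown do not bias the distribution of the random draws made by the other contending nodes in $\Delta_i$.
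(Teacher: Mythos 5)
Your proposal takes essentially the same route as the paper: the same decomposition into inter-attempt spacing (the $3\Delta$ back-off range counted over unused slots), per-attempt success probability $1/\ell$ via the analysis of~\cite{Hoepman:2011}, an extra factor $\delta$ when Lemma~\ref{th:PeriodLocalFree} only guarantees a slot that is safe with respect to a single neighbor, and an extra factor $\tau$ for $\act$ neighbors that count down only in frames numbered $s_k$. The only substantive point the paper includes that you omit is a brief argument that slots falsely reported as unused by $FI_k$ (because two mutually colliding $\act$ neighbors are both invisible to $p_k$) cost only a constant factor --- two conflicting nodes block at most three slots while the $\tau>2\Delta$ budget reserves two slots per node, adding a factor $3/2$ absorbed into the constants --- whereas the independence concern you flag at the end is not addressed by the paper's proof at all.
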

\begin{proof}
   Lemmas~\ref{th:PeriodFree} and~\ref{th:PeriodLocalFree} are showing the
   existence of unused time slots. We showed above that the frame information
   set reaches a state in which it is consistent with the actual assignment of
   time slots. It follows that a node has the chance to choose a free time
   slot and exchange packets with all neighbors. We show the expect
   communication delay.
   
   Let $c[x]$ be a configuration.
   Assume node $p_k$ is $\psv$ in $c[x]$. Node $p_k$ counts down a
   random number of unused timeslots regarding $FI_k$.
   Since $FI_k$ does not necessarily contain information about conflicting
   neighbors, i.e., $p_\ell,p_m\in\delta_k\cap\sA(c[x])$ whose data packet timeslots are
   intersecting, $p_k$ could use a slot which is used by some neighbors.
   Suppose that some nodes in $\delta_k\cap\sA(c[x])$ are not in $FI_k$ due to
   conflicts. Then they intersect maximum $2/3$ of the unused timeslots in
   $FI_k$, since two conflicting nodes can only intersect with
   three timeslots, but for each node we add two slots to our frame
   ($\tau>2\Delta$). Therefore, a factor of $3/2$ is added to our expected
   value $\ell$. The same holds if $p_k$ is $\act$.
   In the case of $\tau\in[4\delta+1,2\Delta]$, there are maximum $\delta$
   different timeslots for transmitting packets to all neighbors; one for each
   $p_k\in\delta_i$. The choice of a timeslot is random and, thus, there is an
   additional factor of $\delta$ to hit this timeslot.
\end{proof}

For Theorem~\ref{th:ConvergeToMax} shows that the converge-to-the-max principle works when given bounds on the expectation of the communication delay, rather than constant delay bounds, as in~\cite{HermanZhang:2006}.
\begin{theorem}\label{th:ConvergeToMax}
   Let $R$ be an execution of Algorithm~\ref{alg:timeslot} that starts from an
   arbitrary configuration $c[x]$. Within expected $\Phi:=\gamma\cdot\diam$ frames,
   a configuration $c[x^{synchro}]$ is reached after which all clocks are
   synchronized, where $\gamma$ is the expected communication delay as in
   Lemma~\ref{th:CSMAcomm2}.
\end{theorem}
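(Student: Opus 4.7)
The plan is to run a standard converge-to-the-max argument, using Lemma~\ref{th:CSMAcomm2} to bound the expected per-hop propagation delay. Let $c[x]$ be the starting configuration and let $M:=\max_{p_i\in P} C_i(c[x])$. Because we assume zero clock skew, every node's clock advances at the same rate, so in the absence of the update step in line~\ref{alg:timeslot:AdvanceClock} the ``gap'' $M - C_i$ of every node $p_i$ is preserved throughout the execution. In particular, the quantity ``maximum clock currently present in the system'' only grows at the natural rate of time, and the set $S(t):=\{p_i\in P : C_i(t) \text{ equals the current maximum}\}$ is monotone non-decreasing in $t$ (a node can only join $S$ via line~\ref{alg:timeslot:AdvanceClock}, never leave).

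First I would show a one-hop expansion claim: if $p_i\in S(t)$ and $p_j\in\delta_i$, then within an expected $\gamma$ frames, $p_j$ enters $S$. This is immediate from Lemma~\ref{th:CSMAcomm2}: within an expected $\gamma$ frames after $t$, $p_j$ successfully receives a packet from $p_i$, and since $p_i$ still holds the maximum clock at that moment (any possible growth of the maximum in the meantime has been matched by $p_i$ due to zero skew), the condition $t_i<t_j$ in line~\ref{alg:timeslot:checkClock} fails with $t_j\le t_i$; more importantly, if $p_j$'s clock is strictly smaller than the current maximum, $p_j$'s execution of lines~\ref{alg:timeslot:AdvanceClock}--\ref{alg:timeslot:FIadvanceClock} advances $C_j$ to $p_i$'s value, placing $p_j\in S$. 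Note that the assumption $c\gg\diam\tau^2$ rules out any wrap-around interfering with the comparison.

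Next I would iterate this claim along a shortest path. Let $p_\text{max}\in S(c[x])$. For any node $p_i$ at distance $d\le\diam$ from $p_\text{max}$, fix a shortest path $p_\text{max}=q_0,q_1,\ldots,q_d=p_i$. Let $T_k$ be the (random) number of frames after which $q_k$ enters $S$, starting from the first moment $q_{k-1}\in S$. By the one-hop claim, $\mathbb{E}[T_k]\le\gamma$, so by linearity of expectation $\mathbb{E}[\sum_{k=1}^d T_k]\le d\gamma\le\gamma\diam$. Hence $p_i$ enters $S$ within expected $\gamma\diam$ frames. Taking a union bound over all nodes still gives $\mathbb{E}[\Phi]=\gamma\diam$ because the propagations to different nodes share prefixes, and the latest of them is bounded by the propagation along a BFS tree of depth $\diam$ rooted at $p_\text{max}$. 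Once $S(t)=P$, every node has the same clock value, which by zero skew remains true forever after; the corresponding configuration is $c[x^\text{synchro}]$.

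The main obstacle I anticipate is the bookkeeping around ``who currently holds the max,'' because the numerical max is a moving target: it advances at the rate of real time. The clean way to handle this is to track the constant ``gap'' $M - C_i$ under zero clock skew, so that joining $S$ means ``gap becomes $0$'' and this gap is non-increasing. A secondary subtlety is that $p_i$'s \emph{own} transmission of a message with its current clock is what lets a neighbor catch up; Lemma~\ref{th:CSMAcomm2} guarantees such a transmission within $\gamma$ frames in expectation, but one must make sure it is not rejected by $ConflictWithNeighbors$ in a way that prevents line~\ref{alg:timeslot:AdvanceClock} from executing. Inspecting line~\ref{alg:timeslot:rx} shows that the clock update happens before any status change and independently of the conflict test, so the argument goes through.
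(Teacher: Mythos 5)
Your argument reproduces the paper's first step (the unbounded-clock converge-to-the-max induction: the set $S$ of maximum-clock holders expands by one hop every expected $\gamma$ frames, hence covers $P$ within expected $\gamma\cdot\diam$ frames), and your bookkeeping via the invariant gap $M-C_i$ under zero skew is a clean way to phrase it. However, there is a genuine gap: you dispose of clock wrap-around in one sentence, claiming that $c\gg\diam\tau^2$ ``rules out any wrap-around interfering with the comparison.'' That assumption only guarantees that a clock starting far from $c-1$ will not wrap during the convergence window; it does nothing for an \emph{arbitrary} starting configuration, in which some node may hold $C_i=c-1$ and wrap to $0$ on the very next tick. At that moment the comparison $t_i<t_j$ in line~\ref{alg:timeslot:checkClock} (a plain comparison of values modulo $c$) inverts: the former maximum node suddenly has the \emph{smallest} clock and will adopt other nodes' values. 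This breaks exactly the property your whole argument rests on, namely that a node joins $S$ via line~\ref{alg:timeslot:AdvanceClock} and never leaves it. ``Maximum'' is simply not well defined on a circular clock without further case analysis.

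The paper spends the larger part of its proof on precisely this issue, following the three-case structure of Herman and Zhang: (i) all clocks at least $\Phi$ below $c-1$, where your unbounded argument applies verbatim; (ii) all clocks within $\Phi$ of the wrap point (on either side), which reduces to case (i) after all of them wrap; and (iii) a mixed configuration, which is shown to reduce to case (ii) within an additional expected $\Phi\cdot\diam$ steps. To repair your proof you would need to add this (or an equivalent) case analysis; the rest of your argument --- the one-hop expansion claim, the observation that the conflict test in line~\ref{alg:timeslot:detectConflict} does not prevent the clock update, and the propagation along a BFS tree --- matches the paper's reasoning for the unbounded case.
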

\begin{proof}

For a moment consider the case of an unbounded clock. Let $M\subset\sP$ the
set of nodes with maximum clock in an arbitrary configuration $c$. Then we
expect every execution to reach a configuration $c'$ within the communication
delay of $\gamma$ frames in which
every neighbor $p_i$ of $M$ has received a message from a node in $M$. Thus,
$p_i$ converged the clock to the maximum value is part of the set of nodes
with maximum clock in $c'$. Note that a node in $M$ is not leaving $M$, since
we neither assume clock skew, nor can it receive a larger clock value.
It follows by induction that the set of nodes with maximum clock value is
increasing monotonically and within expected $\gamma\cdot\diam$ frames $M$
covers $P$ and thus a configuration $c[x^{\textup{synchro}}]$ is reached.

We proceed with investigating bounded clocks. As in~\cite{HermanZhang:2006}
we consider three cases.
In the first case, all clock values, $C_i$, are
smaller than the maximal clock value, $c-1$, by at least the algorithm
convergence time, i.e., $\forall_{p_i}:C_i\in[0,c-1-\Phi]$. The proof
of this case follows that arguments above for the unbounded case.

The second case,
$\forall_{p_i}:C_i\in[0,\Phi-1]\lor C_i[c-\Phi,c-1]$, is that
all clocks are near wrapping around. Clocks can change from the lower interval
to the higher one, but after expected $\Phi$ clock steps, all clocks have
wrapped around, and reached the lower interval $[0,\Phi-1]$, or have
left the lower interval by counting up normally, but not by calling
$AdvanceClock()$. Thus, the proof of the second case is followed by the
arguments of the first case.

The third case supposes that in configuration $c[x]$ there is at least on node whose
clock value is in the range $[\Phi,c-1-\Phi]$, and at least
one with a clock in $[c-1-\Phi,c-1]$. Note that those with a clock in
$[c-1-\Phi,c-1]$ wrap around in maximal $\Phi$ clock steps. Let $c[x']$ be the
configuration after $c[x]$, such that all nodes with a clock value in
$[c-1-\Phi,c-1]$ have wrapped around.

There are three cases. First, all nodes are able to receive a message from one
of the nodes with a clock value in $[c-1-\Phi,c-1]$, before they wrap around
in configuration.Then we have the second case and we are done. If in
configuration $c[x']$ not all nodes receive such a large clock value in
$[c-1-\Phi,c-1]$, then either there is no node in $c[x']$ that has a clock
value in $[c-1-\Phi,c-1]$ and we have the first case, or there is at least one
node with a clock value in $[c-1-\Phi,c-1]$. Let $p_i$ one of the nodes that
have the largest clock value in $c[x']$ and let $c[x'']$ be the first
configuration of $R$ after $c[x]$ such that $p_i$'s clock is in
$[c-1-\Phi,c-1]$. Then, expected $\Phi\cdot\diam$ clock steps after $c[x'']$,
a configuration $c[x''']$ is reached in which all nodes in $\sP$ have either received
a clock that is equal or larger to $p_i$'s clock value and adopted it. Thus,
in $c[x''']$ all nodes have either a clock value in $[c-1-\Phi,c-1]$, or in
$[0,\Phi-1]$, because they have wrapped around. So, this is reduces to the
second case.
\end{proof}
Once the clocks are synchronized, and the TDMA timeslots are aligned, Algorithm~\ref{alg:timeslot} allocates
the bandwidth using distance-$2$ coloring. This happens within $\sO(\gamma^2)$
frames, see Theorem~\ref{th:TDMAalloc}.
\begin{theorem}\label{th:TDMAalloc}
   Let $R$ be an execution that starts from an arbitrary configuration
   $c[x^{synchro}]$ in which all clocks are synchronized. Within expected
   $\gamma^2$ frames from $c[x^{synchro}]$, the system reaches a
   configuration, $c[x^{alloc}]$, in which each node $p_i\in P$ has a
   timeslot that is unique in $\Delta_i$.
\end{theorem}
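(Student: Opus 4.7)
The plan is to first observe that because clock skew is zero, once clocks are synchronized at $c[x^{synchro}]$ they remain synchronized throughout the remainder of $R$: the test $t_i<t_j$ at line~\ref{alg:timeslot:checkClock} is never triggered, $AdvanceClock$ is never called, and so no node resets to $\psv$ via line~\ref{alg:timeslot:getpassive3}. Consequently frame and timeslot boundaries are globally aligned, the unused-timeslot guarantees of Lemmas~\ref{th:PeriodFree} and~\ref{th:PeriodLocalFree} apply uniformly, and Lemma~\ref{th:CSMAcomm2} applies: within every window of $\gamma$ frames each $p_i\in P$ receives at least one message from every active neighbor and one from every passive neighbor in expectation.

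Next, I would establish a single-round progress statement: in every $\gamma$-frame window we can charge either an allocation or a conflict resolution. A passive $p_i$ counts down at most $3\Delta$ unused timeslots before attempting a transmission via line~\ref{alg:timeslot:CSMAtransmit}, and by Lemma~\ref{th:CSMAcomm2} this attempt reaches all of $\delta_i$ within expected $\gamma$ frames. If the chosen slot is marked unused in both $FI_i$ and in $FI_k$ of every $p_k\in\delta_i$, the $\welcome$ entries added at line~\ref{alg:timeslot:FIaddWelcome} propagate via $AddToFI$ and $p_i$ transitions to $\act$ with $s_i=s()$. Conversely, if some $p_j\in\Delta_i$ already holds $s_j=s_i$, a common neighbor $p_k\in\delta_i\cap\delta_j$ will carry $\local$ entries for both in its aggregated $FI_k$, and the next broadcast by $p_k$ will satisfy $ConflictWithNeighbors(\cdot)$ at the receiver (line~\ref{alg:timeslot:detectConflict}), forcing at least one of the two back into $\psv$ via line~\ref{alg:timeslot:getpassive1}.

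To bundle these into the claimed $\gamma^2$ bound, I would use a potential $\Phi(c)$ that counts the number of nodes that are either passive or whose $s_i$ collides with some $s_j$ for $p_j\in\Delta_i$. By Lemma~\ref{th:PeriodFree} (or~\ref{th:PeriodLocalFree}) there are at least $\tau-\Delta$ (resp.\ $\tau-2\delta$) admissible timeslots each frame, so the probability that a passive node's random choice from $Unused(FI_i)$ lands on a slot unused in the whole $2$-neighborhood is bounded below by a constant $1/\sO(\tau)$. Because $\tau\in\sO(\gamma)$, the expected number of rounds before $\Phi$ decreases is $\sO(\gamma)$, each round costs $\gamma$ frames, and $\Phi$ decreases monotonically (an allocation that survives the round is locked in, since no future active node will collide with it without triggering conflict resolution on itself). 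Summing over the $|P|\le\sO(\tau)$ required decreases and absorbing constants yields the $\sO(\gamma^2)$ frame bound.

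The main obstacle I expect is the monotonicity/locking argument at step three. One must rule out ping-pong behavior in which an allocated $p_i$ is repeatedly demoted to $\psv$ by spurious $\remote$ entries kept alive through $AddToFI$. This requires arguing that, once conflicts among existing $\act$ nodes are resolved, the $timeOut$ cleanup at line~\ref{alg:timeslot:FIcleanup} eventually purges every stale entry faster than new stale entries can be generated, so the $ConflictWithNeighbors$ test at line~\ref{alg:timeslot:detectConflict} stops firing on surviving $\act$ nodes. A secondary subtlety is independence of random choices across rounds: the $r\in[1,3\Delta]$ phase in $BackOff$ desynchronizes two simultaneously colliding passive nodes, which is what yields a genuine constant per-round success probability instead of an adversarial worst case.
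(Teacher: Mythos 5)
Your first two paragraphs track the paper's own (sketch-level) argument closely: clocks stay synchronized so line~\ref{alg:timeslot:checkClock} never fires, Lemma~\ref{th:CSMAcomm2} supplies the $\gamma$-frame feedback cycle, successful passive attempts are locked in via line~\ref{alg:timeslot:FIaddWelcome}, and conflicting active nodes are demoted via line~\ref{alg:timeslot:detectConflict}; the paper likewise rests on the monotone growth of the set of conflict-free active nodes. Your closing remarks on stale $\remote$ entries and on $BackOff()$ desynchronizing contenders are correct and in fact more explicit than the paper's proof sketch, which does not address the ping-pong issue at all.

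The genuine gap is in your final accounting. You sum a \emph{global} potential over ``$|P|\le\sO(\tau)$ required decreases,'' but no such bound exists: the paper only assumes $\nu>|P|$, and $\tau$ constrains the size of a $2$-neighborhood ($\Delta+1\le\tau$), not of the whole network (the experiments run hundreds of nodes with $\tau=16$). Even granting the premise, one decrease costs you $\sO(\tau)\cdot\gamma=\sO(\gamma^2)$ frames by your own estimate, so $|P|$ decreases cost $\sO(|P|\gamma^2)$, which cannot be ``absorbed'' into $\sO(\gamma^2)$. To recover the claimed bound you must make the progress argument local and parallel: allocations in disjoint $2$-neighborhoods proceed concurrently, only the at most $\Delta+1\le\tau\le\gamma$ nodes within each $\Delta_i$ need to be charged sequentially, and the per-attempt success probability is the \emph{constant} $1-1/e$ delivered by the back-off analysis behind Lemma~\ref{th:CSMAcomm2} (the $\tau$ and $\Delta$ factors are already folded into $\gamma$), not your $1/\sO(\tau)$. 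With those two corrections the expected cost per neighborhood is (number of nodes per neighborhood) $\times$ (expected $\gamma$ frames per successful attempt or negative feedback) $=\sO(\gamma^2)$, which is the composition the paper's sketch intends.
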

\begin{proofsketch}
   We have showed that active nodes get feedback within an expected time. Active nodes with positive
   feedback stay active, but conflicting active nodes get a negative feedback
   and change their status to passive.
   Passive nodes are transmitting from time control packets. They are
   successful and stay active on this timeslot with probability $1-1/e$.
   Otherwise, they get a negative feedback within expected
   $\gamma$ frames. Hence, the number of active nodes without
   conflicts is monotonically increasing until every node is active.

   The convergence time of this timeslot assignment is dominated by the time
   for a successful transmission and the time for a negative feedback in case
   of a unsuccessful transmission. This leads to an expected convergence time
   of $\gamma^2$.
\end{proofsketch}
The proof of Theorem~\ref{th:selfStab} is concluded by showing that
configuration, $c[x^{alloc}]$ (Theorem~\ref{th:TDMAalloc}), is a safe
configuration with respect to $\LE$, see Lemma~\ref{th:SafeConfig}.
\begin{lemma}
\label{th:SafeConfig}
    Configuration $c[x]$ is a safe configuration with respect to
    $LE_{_\mathrm{TDMA}}$, when (1) $\forall_{p_i,p_j\in P}:C_i=C_j$, (2)
    $\forall_{p_i\in P}: status_i=\act$, (3) $\forall_{p_i\in
    P}\forall_{p_j\in\Delta_i}: s_i\neq s_j$, (4) $\forall_{ p_i\in
    P}:\forall_{p_j\in\Delta_i\cup\{p_i\}}\exists!_{\langle
    id,\msg,\bullet\rangle\in FI}:id=id_j$.
\end{lemma}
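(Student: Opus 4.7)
The plan is to regard the four conditions as an invariant $\mathcal{I}$ and to show that each atomic step of Algorithm~\ref{alg:timeslot}, taken from a configuration satisfying $\mathcal{I}$, produces another configuration satisfying $\mathcal{I}$. Since the conjunction of (1) and (3) is exactly $LE_{_\mathrm{TDMA}}$, this proves that every suffix starting at $c[x]$ lies in $LE_{_\mathrm{TDMA}}$ and hence that $c[x]$ is safe. Conditions (2) and (4) play the role of auxiliary book-keeping that ensures the transitions which could violate (1) or (3) never fire.

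First I would derive collision-freedom from (1)--(3). At any timeslot $s$ the data transmitters are exactly the active nodes with $s_i=s$ (line~\ref{alg:timeslot:TDMAcond}); by (3) no two such nodes lie within distance $2$, so Property~\ref{def:collisionProperty} delivers every data packet to each $p_j\in\delta_i$. The reception of that packet refreshes the entry $\langle j,\msg,\local,t_i\rangle\in FI_i$ via line~\ref{alg:timeslot:storePacket:data}, and the $AddToFI$ call at line~\ref{alg:timeslot:AddToFI} analogously refreshes each $\remote$ entry of (4) for $p_k\in\Delta_i\setminus\delta_i$ via the transmitted $Local(FI_j)$. Provided $\mathrm{timeOut}$ exceeds one frame, no entry demanded by (4) is garbage-collected at line~\ref{alg:timeslot:FIcleanup} before it is refreshed, so (4) is preserved.

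Next I would rule out the transitions that could violate $\mathcal{I}$. The clock update at line~\ref{alg:timeslot:AdvanceClock} is guarded by $t_i<t_j$; with zero clock skew and zero propagation delay, (1) yields $t_j=t_i$ at every $receive()$, so that guard fails and (1) is preserved. The transition to $\psv$ at line~\ref{alg:timeslot:getpassive0} is guarded by $ConflictWithNeighbors(FI_j)$; the self-entry required by its first disjunct exists in $FI_j$ by applying (4) at $p_j$, the slot identity required by the second disjunct holds by (1) so that $Slot(t_i)=s_i$, and by (3) no other entry in $FI_j$ has a slot intersecting $s_i$, so every disjunct fails. The passivation at line~\ref{alg:timeslot:getpassive3} is unreachable because its enclosing $t_i<t_j$ branch is never entered. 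Hence (2) and (3) are preserved.

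The main obstacle I foresee is the control-packet branch reachable from line~\ref{alg:timeslot:CSMAcond}, in which an active node whose $wait_i$ decays to zero may still transmit a control packet during a slot it views as unused; I must verify that this does not corrupt $\mathcal{I}$. Its payload is $0$, so the $m'\neq\bot$ guard in line~\ref{alg:timeslot:storePacket:data} fails at the receiver and no $\msg$ record is overwritten; the $\welcome$-adding branch at line~\ref{alg:timeslot:FIaddWelcome} is unreachable because its guard requires the sender to be passive, contradicting (2); and the only assignment to $s_i$ inside the block of line~\ref{alg:timeslot:CSMAcond} is itself guarded by $status_i\neq\act$, so $s_i$ and $status_i$ are preserved. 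A potential collision between two control packets sent simultaneously by distance-$2$ active nodes affects only back-off liveness, not $\mathcal{I}$. Combining these closure arguments by induction on the step index yields the lemma.
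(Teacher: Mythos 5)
Your proposal follows essentially the same route as the paper's proof: conditions (1)--(4) are treated as an inductive invariant, with (1) disabling the $AdvanceClock$ branch and (3)+(4) guaranteeing that $ConflictWithNeighbors$ never fires, so no node is ever demoted to $\psv$ and conditions (1) and (3) --- which are exactly the defining conditions of $LE_{_\mathrm{TDMA}}$ --- persist in every subsequent configuration. Your write-up is in fact more detailed than the paper's (explicit handling of the active-node control-packet branch and of the $timeOut$ refresh needed to preserve (4)); the only slip is the phrase asserting $Slot(t_i)=s_i$ when discharging the second disjunct, which should read $Slot(t_i)\neq s_i$, since that disjunct must evaluate to false for the conflict test to fail.
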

\begin{proof}
   First we check that $c[x]$ is legal regarding $LE_{_\mathrm{TDMA}}$.The
   conditions (1) and (3) are coinciding with the conditions of a legal
   execution $LE_{_\mathrm{TDMA}}$. Condition (2) is necessary in combination
   to (3) to ensure that the TDMA slot stored in $s_i$ is valid. Condition (4)
   is a restriction to general configurations in $LE_{_\mathrm{TDMA}}$ that
   ensures that a node knows about its neighborhood.

   We conclude by proofing that the conditions of this Lemma hold for all
   configurations following $c[x]$ in an execution $R$ of
   Algorithm~\ref{alg:timeslot}.
   Since (1) holds in $c[x]$, this means all clocks are synchronized, a node
   can never receive a clock value that is larger than its own, so the clock
   update step in line~\ref{alg:timeslot:AdvanceClock} is never executed. Thus
   for all following configurations to $c[x]$ in $R$ condition (1) holds.
   A node changes its $status_i$ to $\psv$ when it updates its clock
   (line~\ref{alg:timeslot:AdvanceClock}), or when it detects a conflict with
   the slot assignment (line~\ref{alg:timeslot:detectConflict}). From (1)
   follows that there is no clock update. From (3) follows that in $c[x]$
   there is no conflict with the slot assignment and from (4) that everyone is
   aware that there is no conflict. Furthermore, (4) implies that every node
   selects only slots for control packets that are free within the distance-2
   neighborhood. Thus, in $R$ there is never a message transmitted from that
   the receiver can detect a conflict and no conflict is introduced by sending
   a control packet on a data packet slot of a distance-2 neighbor. This
   proofs that $c[x]$ is a safe configuration for $LE_{_\mathrm{TDMA}}$.
\end{proof}
\begin{cor}\label{th:Safe}
   The configuration $c[x^{alloc}]$ is a safe configuration for the task
   $\sT_{_\mathrm{TDMA}}$.
\end{cor}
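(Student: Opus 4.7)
The plan is to reduce the corollary to Lemma~\ref{th:SafeConfig} by verifying that the configuration $c[x^{alloc}]$ (possibly after absorbing an additional $\sO(\tau)$ frames into the notation, which is dominated by the already proved bounds) satisfies each of the four sufficient conditions (1)--(4) listed there. Condition (3), the uniqueness of $s_i$ within $\Delta_i$ for every $p_i\in P$, is exactly the conclusion of Theorem~\ref{th:TDMAalloc}, so nothing additional is required for that item.

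For condition (1), I would observe that $c[x^{alloc}]$ is reached later in $R$ than the synchronized configuration $c[x^{synchro}]$ of Theorem~\ref{th:ConvergeToMax}, and argue that synchronization is preserved in between. Since clocks are equal in $c[x^{synchro}]$ and zero propagation/skew is assumed, any received timestamp satisfies $t_j = t_i$, so the guard $t_i < t_j$ of line~\ref{alg:timeslot:checkClock} never fires, and $AdvanceClock$ is never invoked. Therefore $\forall_{p_i,p_j}:C_i=C_j$ holds throughout the interval $[x^{synchro},x^{alloc}]$, yielding (1). Condition (2) is implicit in the proof sketch of Theorem~\ref{th:TDMAalloc}: active nodes that succeed accumulate monotonically once their timeslot is unique in $\Delta_i$, so at $c[x^{alloc}]$ every node has $status_i=\act$; once (3) holds, no node can receive a packet triggering $ConflictWithNeighbors$ at line~\ref{alg:timeslot:detectConflict}, so the status is preserved.

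The main obstacle is condition (4), which Theorem~\ref{th:TDMAalloc} does not explicitly guarantee. To establish it, I would argue as follows. After $c[x^{alloc}]$, each $p_j\in P$ transmits a data packet in its own unique timeslot $s_j$ once per frame. By (3) and Property~\ref{def:collisionProperty}, this transmission is collision-free at every $p_i\in\delta_j$, so after at most $\tau$ timeslots each direct neighbor $p_i$ has executed line~\ref{alg:timeslot:storePacket:data} and inserted the unique entry $\langle j,\msg,\local,\cdot\rangle$ into $FI_i$. The embedded $Local(FI_j)$ field that $p_j$ attaches to its transmission then propagates, via line~\ref{alg:timeslot:AddToFI}, the identities of $p_j$'s direct neighbors to every $p_i\in\delta_j$ as $\remote$ entries; one more round of frames thus populates $FI_i$ with exactly one entry for every $p_k\in\Delta_i\cup\{p_i\}$. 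Stale entries are removed within $timeOut$ by the cleanup step on line~\ref{alg:timeslot:FIcleanup}, so after $\sO(\tau)$ additional frames condition (4) holds and is preserved (since each neighbor re-transmits every frame and refreshes its entry before it ages out).

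With all four conditions of Lemma~\ref{th:SafeConfig} established at (or within $\sO(\tau)$ frames after) $c[x^{alloc}]$, that lemma certifies $c[x^{alloc}]$ as a safe configuration with respect to $LE_{_\mathrm{TDMA}}$, which is precisely the statement of the corollary. The anticipated difficulty is purely bookkeeping: verifying (4) demands tracking how local and remote entries propagate and how $timeOut$ interacts with the refresh rate of one transmission per frame; once (4) is in place, (1)--(3) follow from invariants that the algorithm maintains trivially in a conflict-free, synchronized state.
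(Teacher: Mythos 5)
Your proposal follows the same route as the paper's own proof: both reduce the corollary to Lemma~\ref{th:SafeConfig} and check its four conditions, with (1) supplied by Theorem~\ref{th:ConvergeToMax} and (3) by Theorem~\ref{th:TDMAalloc}. The difference is one of rigor rather than strategy. The paper's proof is a two-sentence assertion that conditions (2), (3) and (4) all "follow from Theorem~\ref{th:TDMAalloc}"; you correctly observe that condition (4) --- that every $FI_i$ contains exactly one $\msg$-entry per node of $\Delta_i\cup\{p_i\}$ --- is not an explicit conclusion of that theorem, and you supply the missing argument: collision-free periodic data transmissions populate the $\local$ entries via line~\ref{alg:timeslot:storePacket:data}, the piggybacked $Local(FI_j)$ sets propagate the $\remote$ entries via line~\ref{alg:timeslot:AddToFI}, and the $timeOut$ cleanup of line~\ref{alg:timeslot:FIcleanup} removes duplicates and stale entries within $\sO(\tau)$ additional frames. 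You are also more explicit than the paper about why synchronization and active status persist between $c[x^{synchro}]$ and $c[x^{alloc}]$. The one caveat is that your argument certifies a configuration $\sO(\tau)$ frames \emph{after} $c[x^{alloc}]$ rather than $c[x^{alloc}]$ itself (an entry for $p_i$ in $FI_i$, in particular, only arrives once a neighbor echoes $p_i$'s identity back), so the corollary as literally stated requires either redefining $c[x^{alloc}]$ to absorb those frames, as you propose, or folding condition (4) into the conclusion of Theorem~\ref{th:TDMAalloc}; since the extra delay is dominated by the bounds already proved, this is a presentational choice, not a gap. Overall your version is a strictly more complete rendering of the same argument.
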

\begin{proof}
   We check that $c[x^{alloc}]$ fulfills the conditions of
   Lemma~\ref{th:SafeConfig}. Condition (1) follows from
   Theorem~\ref{th:ConvergeToMax} and conditions (2), (3) and (4) are
   following from Theorem~\ref{th:TDMAalloc}.
\end{proof}
\begin{proof}[Proof of Theorem~\ref{th:selfStab}]
   Lemma~\ref{th:PeriodFree},~\ref{th:PeriodLocalFree} and~\ref{th:CSMAcomm} show that
   communication is possible after a constant number of clock steps.
   Lemma~\ref{th:CSMAcomm2} bounds the expected communication delay to
   $\gamma$. Theorem~\ref{th:ConvergeToMax} shows that after
   $\sO(\gamma\cdot\diam)$ frames a configuration $c[x^{synchro}]$ is
   reached where the clocks are synchronized. Theorem~\ref{th:TDMAalloc} shows
   that in $\sO(\gamma^2)$ frames after $c[x^{synchro}]$, a configuration
   $c[x^{alloc}]$ is reached in which all nodes have an allocated timeslot.
   Corollary~\ref{th:Safe} shows that Algorithm~\ref{alg:timeslot} solves
   $\sT_{_\mathrm{TDMA}}$ by reaching $c[x^{alloc}]$.
\end{proof}
\Section{Experimental results}
\begin{figure}
   \centering
\begin{tikzpicture}[only marks]
    \begin{axis}[xlabel={number of nodes},ylabel={frames},grid=major,%
        width=8cm, height=4cm,
        ymin=0,ymax=150]

        \addplot table[x index=0,y index=2,col sep=comma] {converge.dat};
        \addlegendentry{grid graph}%

        \addplot table[x index=0,y index=1,col sep=comma] {converge.dat};
        \addlegendentry{random graph}

    \end{axis}
\end{tikzpicture}
\caption{The converges time in frames for different graphs. In the grid graph, nodes are
placed on a lattice and connected to their four neighbors. The convergence times are the average over 16
runs that start each with random clock offsets. The random node graph is a unified disk graph
with random node placement with maximal 16 neighbors pair node.}\label{fig:conv}
\end{figure}
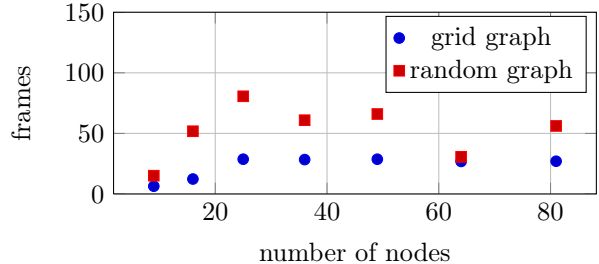
We demonstrate the implementation feasibility. We study the behavior of the
proposed algorithm in a simulation model that takes into account timing
uncertainties. Thus, we demonstrate feasibility in a way that is close to the
practical realm.

The system settings (Section~\ref{s:sys}) that we use for the correctness
proof (Section~\ref{s:cor}) assumes that any (local) computation can be done
in zero time. In contrast to this, the simulations use the TinyOS embedded
operating systems~\cite{levis2005tinyos} and the Cooja simulation
framework~\cite{osterlind2006cross} for emulating wireless sensor nodes
together with their processors. This way Cooja simulates the code execution on
the nodes, by taking into account the computation time of each step. We
implemented the proposed algorithm for sensor nodes that use IEEE 802.15.4
compatible radio transceivers. The wireless network simulation is according to
the system settings (Section~\ref{s:sys}) is based on a grid graph
with $4 \geq \delta$ as an upper bound on the node degree and a random
graph with $16\geq \delta$ as an upper bound on the node degree. The
implementation uses clock steps of $1$ millisecond. We use a time slot size of
$\xi=20$ clock steps, where almost all of this period is spent on
transmission, packet loading and offloading. The frame size is
$\tau=16\ge4\delta$ time slots for the grid graph and $\tau=64\ge4\delta$ for the
random graphs. For these settings, all experiments showed
convergence, see Figure~\ref{fig:conv}.
\Section{Conclusions}
This work considers fault-tolerant systems that have basic radio and clock
settings without access to external references for collision detection, time or
position, and yet require constant communication delay. We study
collision-free TDMA algorithms that have uniform frame size and uniform
timeslots and require convergence to a data packet schedule that does not
change. By taking into account (local) computation time uncertainties, we observe that the algorithm is close to the practical realm. Our analysis considers the timeslot allocation aspects of the studied
problem, together with transmission timing aspects. Interestingly, we show
that the existence of the problem's solution depends on convergence criteria
that include the ratio, $\tau/\delta$, between the frame size and the node
degree. We establish that $\tau/\delta \geq 2$ as a general convergence
criterion, and prove the existence of collision-free TDMA algorithms for which
$\tau/\delta \geq 4$. Unfortunately, our result implies that, for our systems
settings, there is no distributed mechanism for asserting the convergence
criteria within a constant time. For distributed systems that do {\em not}
require constant communication delay, we propose to explore such criteria
assertion mechanisms as future work.

~\\
\Subsection{Acknowledgments}
This work would not have been possible without the contribution of
Marina Papatriantafilou, Olaf Landsiedel and Mohamed H.\ Mustafa in many helpful discussions, ideas, problem definition and
analysis.
\bibliographystyle{abbrv}
\bibliography{bib/localbib}
\end{document}